\newtheorem{theorem}{Theorem}[section]
\newtheorem{lemma}[theorem]{Lemma}
\newtheorem{corollary}[theorem]{Corollary}
\newtheorem{definition}[theorem]{Definition}
\newcommand{\dS}{$d$-\textsc{Scattered Set}}
\newcommand{\cw}{\ensuremath\textrm{cw}}
\newcommand{\tw}{\ensuremath\textrm{tw}}
\title{Improved (In-)Approximability Bounds for $d$-Scattered Set}
\author{Ioannis Katsikarelis}
\author{Michael Lampis}
\author{Vangelis Th.\ Paschos}
\affil{Université Paris-Dauphine, PSL Research University, CNRS, UMR 7243 \\ LAMSADE, 75016, Paris, France \texttt{\{ioannis.katsikarelis|michail.lampis|paschos\}@lamsade.dauphine.fr}}
 \date{}
\begin{document}
\maketitle

\begin{abstract}
In the \dS\ problem we are asked to select at least $k$ vertices of a given graph, so that the distance between any pair is at least $d$. We study the problem's (in-)approximability and offer improvements and extensions of known results for \textsc{Independent Set}, of which the problem is a generalization.
Specifically, we show:
\begin{itemize}
 \item A lower bound of $\Delta^{\lfloor d/2\rfloor-\epsilon}$ on the approximation ratio of any polynomial-time algorithm for graphs of maximum degree $\Delta$ and an improved upper bound of $O(\Delta^{\lfloor d/2\rfloor})$ on the approximation ratio of any greedy scheme for this problem.
 \item A polynomial-time $2\sqrt{n}$-approximation for bipartite graphs and even values of $d$, that matches the known lower bound by considering the only remaining case.
 \item A lower bound on the complexity of any $\rho$-approximation algorithm of (roughly) $2^{\frac{n^{1-\epsilon}}{\rho d}}$ for even $d$ and $2^{\frac{n^{1-\epsilon}}{\rho(d+\rho)}}$ for odd $d$ (under the randomized ETH), complemented by $\rho$-approximation algorithms of running times that (almost) match these bounds.
\end{itemize}
\end{abstract}

\section{Introduction}\label{sec_intro}

In this paper we study the \dS\ problem: given graph $G=(V,E)$, we are asked if there exists a set $K$ of at least $k$ \emph{selections} from $V$, such that the distance between any pair $v,u\in K$ is at least
$d(v,u)\ge d$, where $d(v,u)$ denotes the shortest-path distance from $v$ to $u$.
The problem can already be seen to be hard as it generalizes \textsc{Independent Set} (for $d=2$) and thus the optimal $k$ cannot be approximated to $n^{1-\epsilon}$ in polynomial time \cite{Hastad96} (under standard complexity assumptions), while an alternative name is \textsc{Distance-$d$ Independent Set} \cite{EtoGM14,MontealegreT16}.

The problem has been well-studied, also from the parameterized point of view \cite{katsikarelisWG,Pilipczuk18}, while approximability in polynomial time has already been considered for bipartite, regular and degree-bounded graphs \cite{EtoILM16,EtoGM14}, perhaps the natural candidate for the next intractability frontier. This paper aims to advance our understanding in this direction by providing the first lower bound on the approximation ratio of any polynomial-time algorithm as a function of the maximum degree of any vertex in the input graph, while also improving upon the known ratios to match this lower bound. On bipartite graphs, our aim is to complete the picture by considering the only remaining open case for this class, 
before turning our attention to super-polynomial running times with the purpose of extending known upper/lower bounds for \textsc{Independent Set}.
%

Before moving on to describe our results in detail, we note that these may be dependent on the parity of our distance parameter $d$ as being even or odd. Both our running times and ratios can be affected by this peculiarity of the problem that, intuitively, arises due to the (non)existence of a \emph{middle} vertex on a path of length $d$ between two endpoints: if $d$ is even then such a vertex can exist at equal distance $d/2$ from any number of vertices in the solution, while if $d$ is odd there can be no vertex at equal distance from any pair of vertices in the solution. This idiosyncrasy can change the way in which both our algorithms and hardness constructions work and in some cases even entirely alters the problem's complexity (e.g.\ in the results of \cite{EtoGM14}). 

\paragraph{Our contribution:} Section~\ref{dSS_poly-sec} concerns itself with strictly polynomial running times. We first show that there is no polynomial-time approximation algorithm for \dS\ with ratio $\Delta^{\lfloor d/2\rfloor-\epsilon}$ in graphs of maximum degree $\Delta$. Our complexity assumption is NP$\not\subseteq$BPP due to our use as a starting point of a randomized construction for \textsc{Independent Set} by \cite{Chalermsook13}, that we then build upon to produce highly efficient (in terms of maximum vertex degree and diameter) instances of \dS. This is the first lower bound that considers $\Delta$ and generalizes the known $\Delta^{1-\epsilon}$-inapproximability of \textsc{Independent Set} (see Theorem~5.2 of \cite{Chalermsook13}, restated here as Theorem~\ref{gap_reduction}, as well as \cite{Alon1995}). Maximum vertex degree $\Delta$ plays an important role in the context of independence (e.g.\ \cite{Berman99,DEMANGE1997105, Halldorsson1997}) and was specifically studied for \dS\ in \cite{EtoILM16}, where polynomial-time $O(\Delta^{d-1})$- and $O(\Delta^{d-2}/d)$-approximations are given. We improve upon these upper bounds by showing that any degree-based greedy approximation algorithm in fact achieves a ratio of $O(\Delta^{\lfloor d/2\rfloor})$, also matching our lower bound.
We then turn our attention to bipartite graphs and show that \dS\ can be approximated within a factor of $2\sqrt{n}$ in polynomial time also for even values of $d$, matching its known $n^{1/2-\epsilon}$-inapproximability from \cite{EtoGM14} and complementing the known $\sqrt{n}$-approximation for odd values of $d$ from \cite{HalldorssonKT00}.

Section~\ref{dSS_super-poly-sec} follows this up by considering super-polynomial running times, presenting first an exact exponential-time algorithm for \dS\ of complexity $O^*((ed)^{\frac{2n}{d}})$ based on a straightforward upper bound on the size of any solution and then considering the inapproximability of the problem in the same complexity range. We show that no $\rho$-approximation algorithm can take time (roughly) $2^{\frac{n^{1-\epsilon}}{\rho d}}$ for even $d$ and $2^{\frac{n^{1-\epsilon}}{\rho(d+\rho)}}$ for odd $d$, under the (randomized) ETH. This is complemented by (almost) matching $\rho$-approximation algorithms of running times $O^*((e\rho d)^{\frac{2n}{\rho d}})$ for even $d$ and $O^*((e\rho d)^{\frac{2n}{\rho(d+\rho)}})$ for odd $d$. We note that the current state-of-the-art PCPs are unable to distinguish between optimal running times of the form $2^{n/\rho}$ and $\rho^{n/\rho}$ for $\rho$-approximation algorithms, due to the poly-logarithmic factor added by even the most efficient constructions and we thus do not focus on the poly-logarithmic factors differentiating our upper and lower bounds. These results provide a complete characterization of the optimal relationship between the worst-case approximation ratio $\rho$ achievable for \dS\ by any algorithm, its running time and the distance parameter $d$, \emph{for any point in the trade-off curve}, in a similar manner as was done for \textsc{Independent Set} in \cite{Chalermsook13,cyg08} (see also \cite{BonnetLP16,BourgeoisEP11}), by also considering the range of possible values for $d$. We observe that the distance parameter $d$ acts as a scaling factor for the size of the instance, whereby the problem becomes easier when vertices are required to be much further apart, a feature counterbalanced by the chosen approximation ratio $\rho$, with small values guaranteeing the quality of the produced solutions, yet also negatively impacting on the exponent of the running time.

We close the paper with a supplementary note on the treewidth of power graphs obtained through observations related to our previous results, while Section~\ref{sec_conc} provides some concluding remarks and discussion on open problems. Our results are also summarized in Table~\ref{poly_approx_table} below.

\begin{table}[htpb]\centering
\begin{tabular}{|c||c|c|}
\hline
                 & Inapproximability & Approximation \\ \hline\hline
Super-polynomial &$2^{\frac{n^{1-\epsilon}}{\rho d}}$ (\ref{inapprox_even})/ $2^{\frac{n^{1-\epsilon}}{\rho(d+\rho)}}$(\ref{inapprox_odd})& $O^*((e\rho d)^{\frac{2n}{\rho d}})$ (\ref{approx_even})/ $O^*((e\rho d)^{\frac{2n}{\rho(d+\rho)}})$ (\ref{approx_odd})\\ \hline
Polynomial       & $\Delta^{\lfloor d/2\rfloor-\epsilon}$ (\ref{delta_inapprox})& $O(\Delta^{\lfloor d/2\rfloor})$ (\ref{delta_ratio})\\ \hline
Bipartite graphs & $n^{1/2-\epsilon}$ \cite{EtoGM14} & $2\sqrt{n}$ (\ref{bipartite_even_thm})\\ \hline
\end{tabular}\caption{A summary of our results (theorem numbers), for even/odd values of $d$.}\label{poly_approx_table}
\end{table}

\paragraph{Related work:}
 Eto et al.\ (\cite{EtoILM16}) showed that on $r$-regular graphs the problem is APX-hard for $r,d\ge3$, while also
providing polynomial-time $O(r^{d-1})$-approximations. They also show a
polynomial-time 2-approximation on cubic graphs and a polynomial-time
approximation scheme (PTAS) for planar graphs and every fixed constant $d\ge3$. For a class of graphs with at
most a polynomial (in $n$) number of minimal separators, \dS\ can be solved in
polynomial time for even $d$, while it remains NP-hard on chordal graphs and
any odd $d\ge3$ \cite{MontealegreT16}. For bipartite graphs, the problem is
NP-hard to approximate within a factor of $n^{1/2-\epsilon}$ and W[1]-hard for
any fixed $d\ge3$.
Further, for any odd $d\ge3$, it remains NP-complete, inapproximable and
W[1]-hard \cite{EtoGM14}. It is NP-hard even for planar bipartite graphs
of maximum degree 3, yet a 1.875-approximation is available on cubic graphs
\cite{EtoILM17}. Furthermore, \cite{FominLRS11} shows the problem admits an
EPTAS on (apex)-minor-free graphs, based on the theory of bidimensionality,
while on a related result \cite{MarxP15} offers an
$n^{O(\sqrt{n})}$-time algorithm for planar graphs, making use of Voronoi diagrams and based on ideas previously used to obtain geometric QPTASs.
Finally, \cite{katsikarelisWG} presents tight upper/lower bounds on the structurally parameterized complexity of the problem, while \cite{Pilipczuk18} shows that it admits an almost linear kernel on every nowhere dense graph class.

\section{Definitions and Preliminaries}
We use standard graph-theoretic notation. For a graph $G=(V,E)$, we let $V(G)\coloneqq V$ and $E(G)\coloneqq E$, an edge $e\in E$ between $u,v\in V$ is denoted by $(u,v)$ and for a subset $X\subseteq V$,
$G[X]$ denotes the graph induced by $X$.
We let $d_G(v,u)$ denote the shortest-path distance (i.e.\ the number of edges) from $v$ to $u$ in $G$.
We may omit subscript $G$ if it is clear from the context.
The maximum distance between vertices is the \emph{diameter} of the graph, while the minimum among all the maximum distances between a vertex to all other vertices (their \emph{eccentricities}) is considered as the \emph{radius} of the graph.

For a vertex $v$, we let $N^d_G(v)$ denote the (open) \emph{$d$-neighborhood} of $v$ in $G$, i.e.\ the set of vertices at distance $\le d$ from $v$ in $G$ (without $v$), while for a subset $U\subseteq V$, $N^d_G(U)$ denotes the union of the $d$-neighborhoods of vertices $u\in U$. In a graph $G$ whose maximum degree is bounded by $\Delta$, the size of the $d$-neighborhood of any vertex $v$ is upper bounded by the well-known Moore bound: $|N_G^d(v)|\le\Delta\sum_{i=0}^{d}(\Delta-1)^i$. For an integer $q$, the \emph{$q$-th power graph of $G$}, denoted by $G^q$, is defined as the graph obtained from $G$ by adding to $E(G)$ all edges between vertices $v,u\in V(G)$ for which $d_G(v,u)\le q$.
Furthermore, we let $OPT_d(G)$ denote the maximum size of a $d$-scattered set in $G$ and $\alpha(G)=OPT_2(G)$ denote the size of the largest independent set.

We use $\log(n)$ to denote the base-2
logarithm of $n$, while $\log_{1+\delta}(n)$ is the logarithm
base-$(1+\delta)$, for $\delta>0$.  Recall also that
$\log_{1+\delta}(n)=\log(n)/\log(1+\delta)$.
The functions $\lfloor x\rfloor$ and $\lceil x\rceil$, for $x\in\mathbb{R}$, denote the maximum integer that is not larger and the minimum integer that is not smaller than $x$, respectively.


The Exponential Time Hypothesis (ETH) \cite{ImpagliazzoP01,ImpagliazzoPZ01}
implies that 3-\textsc{SAT} cannot be solved in time $2^{o(n)}$ on instances
with $n$ variables (a slightly weaker statement), while the definition can also refer to \emph{randomized} algorithms.
Finally, we recall here the following result by \cite{Chalermsook13} that some of our reductions will be relying on (slightly paraphrased, see also \cite{BonnetLP16}), that can furthermore be seen as implying the $\Delta^{1-\epsilon}$-inapproximability of \textsc{Independent Set} in polynomial time:

\begin{theorem}[\cite{Chalermsook13}, Theorem~5.2]\label{gap_reduction}
 For any sufficiently small $\epsilon>0$ and any $r\le N^{5+O(\epsilon)}$, there is a randomized polynomial reduction that builds from a formula $\phi$ of \textsc{SAT} on $N$ variables a graph $G$ of size $n=N^{1+\epsilon}r^{1+\epsilon}$ and maximum degree $r$, such that with high probability:
 \begin{itemize}
  \item If $\phi$ is satisfiable, then $\alpha(G)\ge N^{1+\epsilon}r$.
  \item If $\phi$ is not satisfiable, then $\alpha(G)\le N^{1+\epsilon}r^{2\epsilon}$.
 \end{itemize}
\end{theorem}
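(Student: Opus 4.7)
The plan is to compose three ingredients: a near-linear-size PCP for label cover, an FGLSS encoding as an independence problem, and a randomized sub-product that simultaneously amplifies the gap and controls the maximum degree. I would begin by invoking a near-linear-size PCP theorem to transform $\phi$ into a 2-prover-1-round (label cover) instance $\mathcal{L}$ with $N^{1+o(1)}$ constraints, perfect completeness, and some small constant soundness $s$. This translates satisfiability into a constant multiplicative gap on the fraction of simultaneously satisfiable constraints.

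Next, I would apply the FGLSS transformation: each vertex of $G$ encodes a (constraint, satisfying assignment) pair of $\mathcal{L}$, and edges connect pairs disagreeing on a shared variable. The independence number of this initial graph is proportional to the maximum fraction of satisfiable constraints, giving only a constant YES/NO gap. To amplify this up to roughly $r^{1-O(\epsilon)}$, I would use a randomized graph product of parameter $t$: form a sub-product by sampling $t$-tuples of FGLSS vertices (tuning the sample size so that $n = N^{1+\epsilon} r^{1+\epsilon}$), with an edge placed whenever two tuples are inconsistent in at least one coordinate. Standard analysis shows the soundness drops to roughly $s^t$, which we push to $r^{-(1-O(\epsilon))}$ by choosing $t \approx \log r / \log(1/s)$, while the YES-case value is preserved under lifting.

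The subtle calibration is to simultaneously guarantee that (i) $n = N^{1+\epsilon} r^{1+\epsilon}$, (ii) each vertex has at most $r$ neighbors in the sub-product, which is enforceable by making the sample sparse enough that, in expectation and by Chernoff, each inconsistent pair from the FGLSS graph spawns only $O(r)$ surviving copies, (iii) in the YES case the $t$-fold lift of a satisfying labeling still yields an independent set of size at least $N^{1+\epsilon} r$, and (iv) in the NO case a union bound over candidate sets of size exceeding $N^{1+\epsilon} r^{2\epsilon}$ rules out all large independent sets with probability $1 - o(1)$. The main obstacle will be exactly this balancing act: the sample must be sparse enough to deliver the degree bound but rich enough that the soundness-driven union bound defeats every potential witness. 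The extra $N^{\epsilon}$ and $r^{\epsilon}$ slack in the size bound is precisely what absorbs the polylogarithmic overheads from the PCP size blow-up, from the graph-product amplification, and from the concentration arguments used to analyze the random edge sampling; getting these slack budgets consistent across all four conditions is where the bulk of the work lies.
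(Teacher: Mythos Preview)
The paper does not contain a proof of this theorem at all: it is stated as a cited result from \cite{Chalermsook13} (Theorem~5.2), introduced with ``we recall here the following result,'' and used purely as a black box in the subsequent reductions. There is therefore no ``paper's own proof'' to compare against.

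Your sketch is a reasonable outline of how the original Chalermsook et al.\ argument goes (near-linear PCP, FGLSS, randomized graph sub-product with degree control), and it identifies the right tension between sparsity for the degree bound and richness for the soundness union bound. But for the purposes of this paper the theorem is simply quoted, and no proof is expected or provided here.
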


\section{Polynomial Time}\label{dSS_poly-sec}
We begin by focusing on the behaviour of the problem in the context of strictly polynomial-time approximation. We first examine graphs of bounded degree and provide a tight bound on the achievable approximation ratio, before turning to bipartite graphs in order to finalize the classification in terms of approximability by considering the only open remaining case (when $d$ is even).

\subsection{Inapproximability}\label{dSS_poly-inapprox-sec}
We show that for sufficiently large $\Delta$ and any $\epsilon_1>0, d\ge4$, the \dS\ problem is inapproximable to $\Delta^{\lfloor d/2\rfloor-\epsilon_1}$ on graphs of degree bounded by $\Delta$, unless NP$\subseteq$BPP. Let us first summarize our reduction. Starting from an instance of \textsc{Independent Set} of bounded degree, we create an instance of \dS\ where the degree is (roughly) the $d/2$-th square root of that of the original instance. As we are able to maintain a direct correspondence of solutions in both instances, the $\Delta^{1-\epsilon'}$-inapproximability of IS implies the $\Delta^{\lfloor d/2\rfloor-\epsilon_1}$-inapproximability of \dS.

The technical part of our reduction involves preserving the adjacency between vertices of the original graph without increasing the maximum degree (too far) beyond $\Delta^{2/d}$. We are able to construct a regular tree as a gadget for each vertex and let the edges of the leaves (their total number being equal to $\Delta$) represent the edges of the original graph. To ensure that our gadget has some useful properties (i.e.\ small diameter), we overlay a number of extra edges on each level of the tree (i.e.\ between vertices at equal distance from the root), only sacrificing a small increase in maximum degree. Our complexity assumption is NP$\not\subseteq$BPP, since for the $\Delta^{1-\epsilon'}$-inapproximability of IS we use the randomized reduction from SAT of \cite{Chalermsook13} (Theorem~\ref{gap_reduction} above). In particular, we will prove the following theorem:

\begin{theorem}\label{delta_inapprox}
 For sufficiently large $\Delta$ and any $d\ge4, \epsilon\in(0,\lfloor d/2\rfloor)$, there is no polynomial-time approximation algorithm for \dS\ with ratio $\Delta^{\lfloor d/2\rfloor-\epsilon}$ for graphs of maximum degree $\Delta$, unless NP$\subseteq$BPP.
\end{theorem}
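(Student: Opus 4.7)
The plan is to prove Theorem~\ref{delta_inapprox} by a gap-preserving reduction from \textsc{Independent Set} on bounded-degree graphs (via Theorem~\ref{gap_reduction}) to \dS. Given an IS-instance $G$ of maximum degree $r$ with the $r^{1-O(\epsilon)}$-inapproximability gap of Theorem~\ref{gap_reduction}, I would build a \dS-instance $H$ of maximum degree $\Delta$, choosing the parameters so that $r = \Theta(\Delta^{\lfloor d/2\rfloor})$. Since the reduction will preserve this gap up to a constant factor, the $r^{1-O(\epsilon)}$ hardness transfers to $\Delta^{\lfloor d/2\rfloor - \epsilon}$-inapproximability of \dS\ under NP$\not\subseteq$BPP (the randomness inherited from Theorem~\ref{gap_reduction}).

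The main construction replaces each $v \in V(G)$ with a gadget $T_v$, a regular tree of depth $h$ close to $\lfloor d/2 \rfloor$ and branching factor close to $\Delta$, whose leaves serve as attachment points carrying the $r$ cross-gadget edges that represent the $r$ edges of $G$ incident to $v$ (either by identifying a leaf of $T_v$ with a leaf of $T_u$ for each $(u,v)\in E(G)$, or by letting each leaf fan out to several adjacent gadgets, thereby shaving one level of depth). To further control the gadget diameter without inflating the maximum degree, I would overlay on each level of $T_v$ (among vertices equidistant from its root) a bounded-degree, low-diameter graph such as an expander, contributing only $O(1)$ or $o(\Delta)$ to the degree of every vertex.

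The main technical obstacle is to simultaneously satisfy three distance constraints: (i) any two vertices inside a single $T_v$ lie at distance $< d$, so at most one vertex per gadget is chosen in any $d$-scattered set; (ii) any vertex of $T_u$ is at distance $< d$ from any vertex of $T_v$ whenever $(u,v)\in E(G)$, so adjacency in $G$ forbids joint selection in $H$; and (iii) for non-adjacent $u,v$, a designated representative vertex of $T_u$ and one of $T_v$ lie at distance $\geq d$ in $H$, so that an IS of $G$ lifts to a $d$-scattered set of at least the same size. These constraints pull in opposite directions, and the parity of $d$ matters: for odd $d$ the inequalities admit a comfortable margin, while for even $d$ the critical distance is exactly $d$ on the boundary of (ii) and must be broken via the attachment pattern, the choice of representative, or the level overlay. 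Once the construction is in place the translation is routine: a YES-instance IS of size $\geq N^{1+\epsilon}r$ lifts to a $d$-scattered set of at least the same size (e.g.\ by picking the root of $T_v$ for each $v$ in the IS), while in a NO-instance the at-most-one-per-gadget bound together with the induced IS structure of $G$ gives $OPT_d(H) \leq N^{1+\epsilon}r^{2\epsilon}$ up to constants. Dividing yields an inapproximability ratio $\Omega(\Delta^{\lfloor d/2\rfloor - O(\epsilon)})$, and a rescaling of $\epsilon$ completes the proof.
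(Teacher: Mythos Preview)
Your proposal is correct and mirrors the paper's approach almost exactly: tree gadgets of branching factor $\delta\approx\Delta$ and depth $\lfloor d/2\rfloor$ (or $d/2-1$ for even $d$), leaves attached across gadgets (identified for odd $d$, joined by an edge for even $d$), level overlays to shrink the gadget diameter, and precisely the three distance lemmas you name yielding $OPT_d(H)=\alpha(G)$. The one imprecision is the overlay degree: a graph on $\delta^i$ vertices with diameter at most $i$ must have degree $\Omega(\delta)$ by the Moore bound, so an $O(1)$-degree expander cannot work directly; the paper adds a cycle plus a random matching (Bollob\'as--Chung) and raises it to the $\Theta(\log\delta)$-th power, giving overlay degree $\delta^{1+o(1)}$, and this $1+o(1)$ factor is exactly what the $\epsilon$ in the exponent $\lfloor d/2\rfloor-\epsilon$ has to absorb.
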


\paragraph{Construction:}
Let $\delta=\left\lceil\sqrt[\lfloor d/2\rfloor]{\Delta}\right\rceil$. Given $\epsilon_1\in(0,\lfloor d/2\rfloor)$ and an instance of \textsc{Independent Set} $G=(V,E)$, where the degree of any vertex is bounded by $\Delta$, we will construct an instance $G'=(V',E')$ of \dS, where the degree is bounded\footnote{We note that this value of $\epsilon_2$ is for odd values of $d$. For $d$ even, the correct value is such that we have the (slightly lower) bound $\delta^{1+\epsilon_2}=\delta+3\delta^{1+2\epsilon_1/d}$, but we write $\epsilon_2$ for both cases to simplify notation.} by $\delta^{1+\epsilon_2}=6\delta^{1+2\epsilon_1/d}$, for $\epsilon_2=2\epsilon_1/d+\log_{\delta}3>2\epsilon_1/d$, while $OPT_2(G)=OPT_d(G')$. We assume $\Delta$ is sufficiently large for $\epsilon_1\ge\frac{d(\log(\log(\Delta))+c)}{4\log(\Delta)/d}$, where $c\le10$ is a small constant, for reasons that become apparent in the following.

Our construction for $G'$ builds a gadget $T(v)$ for each vertex $v\in V$. For even $d$, each gadget $T(v)$ is composed of a $(\delta+1)$-regular tree of height $d/2-1$ and we refer to vertices of $T(v)$ at distance exactly $i$ from the root $t_v$ as being in the $i$-th \emph{height-level} of $T(v)$, letting each such subset be denoted by $T_i(v)$. That is, every vertex of $T_i(v)$ has one neighbor in $T_{i-1}(v)$ (its parent) and $\delta$ neighbors in $T_{i+1}(v)$ (its children). For odd values of $d$, the difference is in the height of each tree being $\lfloor d/2\rfloor$ instead of $d/2-1$. 

Since for even $d$ the number of leaves of $T(v)$ is $\delta^{d/2-1}=(\Delta^{2/d})^{d/2-1}=\Delta^{1-2/d}$ and each such leaf also has $\delta=\Delta^{2/d}$ edges, the number of edges leading outside each gadget is $\delta^{d/2}=\Delta$ and we let each of them correspond to one edge of the original vertex $v$ in $G$, i.e.\ we add an edge between a leaf $x_v$ of $T(v)$ and a leaf $y_u$ of $T(u)$, if $(v,u)\in E$. For odd $d$, the number of leaves is $\left(\left\lceil\sqrt[\lfloor d/2\rfloor]{\Delta}\right\rceil\right)^{\lfloor d/2\rfloor}$ (i.e.\ at least $\Delta$) and we let each leaf correspond to an edge of the original vertex $v$ in $G$, i.e.\ we \emph{identify} two such leaves $x_v,y_u$ of two gadgets $T(v),T(u)$, if $(v,u)\in E$ in $G$. In this way, the gadgets $T(v),T(u)$ share a common ``leaf'' of degree 2, that is at distance $\lfloor d/2\rfloor$ from both roots $t_v\in T(v),t_u\in T(u)$. See Figure \ref{fig:degree_tree} for an illustration.

\begin{figure}[htbp]
 \centerline{\includegraphics[width=140mm]{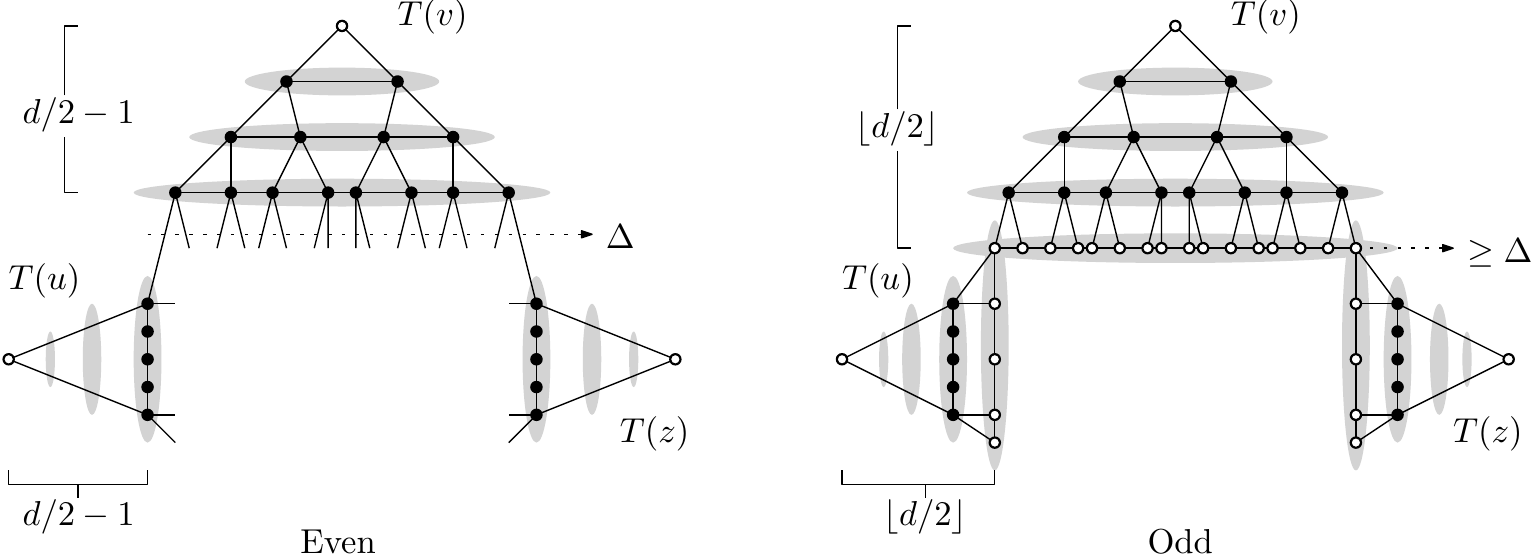}}
 \caption{Our constructions for an example subgraph consisting of a path on three vertices $(u,v,z)$ and even/odd $d$. Ellipses in grey designate the overlaid edges on each height-level.}
 \label{fig:degree_tree}
 \end{figure}

Next, in order to make the diameter of our gadgets at most equal to their height, we will add a number of edges between the vertices of each height-level $i$ of each gadget $T(v)$, for every $v\in V$. We will first add the edges of a cycle plus a random matching (using a technique from \cite{Bollobasv88}) and then the edges of an appropriately chosen power graph of this subgraph containing the edges of the cycle plus the matching.
These edges will be overlaid on each height-level, meaning our final construction will contain the edges of the tree, the cycle, the matching, as well as the power graph.

For even $d$ and each gadget $T(v)$, we first make all vertices $T_i(v)$ at each height-level $i<1+\epsilon_2$ into a clique. For larger height-levels $i\in[1+\epsilon_2,d/2-1]$, we first make the vertices into a cycle (arbitrarily ordered) and then also add a random matching, i.e.\ the edges between each pair of a random partition of $T_i(v)$ into disjoint pairs (plus a singleton if $|T_i(v)|$ is odd). Letting $P_i(v)$ denote these edges of the cycle plus the matching for each $T_i(v)$, we define the subgraph $H_i(v)=(T_i(v),P_i(v))$ and compute the $\lceil((1+2\epsilon_1/d)\log_2(\delta))\rceil$-power graph of $H_i(v)$, finally also adding its edges to $G'$. For odd $d$ and each gadget $T(v)$, we again make the vertices $T_i(v)$ at each height-level $i<1+\epsilon_2$ into a clique and for larger height-levels $i\in[1+\epsilon_2,\lfloor d/2\rfloor]$, we follow the same process.

This concludes our construction, while to prove our claims on the diameter of our gadgets we also make use of the following statements:

\begin{theorem}[\cite{Bollobasv88}, Theorem~1]\label{exp_match_thm}
Let $G$ be a graph formed by adding a random matching to an $n$-cycle. Then with probability tending to 1 as $n$ goes to infinity, $G$ has diameter upper-bounded by $ \log_2(n)+\log_2(\log(n))+c$, where $c$ is a small constant (at most 10).
\end{theorem}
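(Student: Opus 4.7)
To prove Theorem \ref{exp_match_thm} I would analyze breadth-first search in $G = C_n \cup M$, where $C_n$ is the $n$-cycle and $M$ is a uniformly random perfect matching of its vertex set. Every vertex of $G$ has degree exactly $3$ (ignoring the negligible chance that a matching edge coincides with a cycle edge), and the goal is to show that for each starting vertex $v$, the BFS ball $B_k(v) = \{u : d_G(u,v) \leq k\}$ grows by essentially a factor of $2$ per step until it saturates $V$.

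The central tool is the principle of deferred decisions on $M$: while running BFS from $v$, one exposes the matching partner of a vertex only at the moment that vertex enters the BFS frontier. Conditional on the history, the partner is uniformly distributed among the still-unmatched vertices, and the analysis becomes tractable. Writing $b_k = |B_k(v)|$ and letting $F_k = B_k(v) \setminus B_{k-1}(v)$ be the frontier, the next ball contains $B_k(v)$, the cycle-neighbors of $F_k$, and the matching partners of $F_k$. A Chernoff bound applied to the hypergeometric-like count of matching partners landing inside $B_k(v)$ shows that, provided $b_k$ is not too close to $n$, only an $o(1)$-fraction of the freshly exposed matching edges collide with $B_k(v)$, so $b_{k+1} \geq (2 - o(1))\, b_k$, with failure probability inverse-polynomial in $n$ at each step.

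Iterating this doubling inequality from $b_0 = 1$ gives $b_k \geq 2^{k - O(1)}$ for $k$ up to roughly $\log_2 n$, and one additional phase (where the \emph{complement} $V \setminus B_k(v)$ halves at each step by the symmetric argument applied to edges leaving the complement) closes the remaining gap in a further $O(\log \log n)$ steps. A union bound over the $n$ choices of $v$ and the $O(\log n)$ relevant values of $k$ converts each per-step inverse-polynomial failure probability into an $o(1)$ probability that the doubling \emph{ever} fails; this is precisely where the additive $\log_2 \log n$ slack in the stated bound gets consumed, so with high probability every vertex has eccentricity at most $\log_2 n + \log_2 \log n + c$, and hence so does the diameter.

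The main obstacle is to maintain accurate tail bounds on the BFS growth throughout the exposure process: once many matching edges have been revealed, the conditional distribution of the remaining matching is no longer exchangeable, and one must verify that the Chernoff-type estimate still applies. Bollobás and Chung in \cite{Bollobasv88} handle this by a careful martingale along the exposure sequence, together with combinatorial estimates bounding the number of partial matchings compatible with any given history; these together guarantee that the near-uniform-partner approximation used in each doubling step remains valid until the ball has essentially filled $V$.
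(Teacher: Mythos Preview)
The paper does not prove this statement: Theorem~\ref{exp_match_thm} is quoted verbatim from Bollob\'as and Chung~\cite{Bollobasv88} and used as a black box in the construction of Section~\ref{dSS_poly-inapprox-sec}. There is therefore no proof in the paper to compare your proposal against.

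That said, your sketch is a reasonable outline of the Bollob\'as--Chung argument itself: the BFS expansion with deferred exposure of matching edges, the near-doubling of the frontier while the ball is small, and the union bound over starting vertices are indeed the essential ingredients of their proof. One point where your description is somewhat loose is the ``complement halving'' phase: Bollob\'as and Chung do not quite argue that $V\setminus B_k(v)$ halves step by step; rather, once balls from two vertices each reach size of order $n/\log n$, a separate calculation shows they must intersect after one or two further steps, and it is this two-sided argument (growing balls from both endpoints) that produces the $\log_2\log n$ correction. Your final paragraph also overstates the technical machinery: the original paper does not use martingales but direct first- and second-moment estimates on the number of new vertices discovered at each BFS level. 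These are refinements rather than gaps, and for the purposes of the present paper the citation suffices.
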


\begin{lemma}\label{exp_power_lem}
 Let $G$ be a graph of diameter $\le a$. Then the diameter of the $b$-power graph $G^b$ is $\le \lceil a/b\rceil$, for any integer $b<a$.
\end{lemma}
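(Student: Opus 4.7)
The plan is to argue directly from the definition of distance in the power graph. Pick any two vertices $u,v\in V(G)=V(G^b)$; the goal is to exhibit a walk in $G^b$ of length at most $\lceil a/b\rceil$ between them, which by definition bounds $d_{G^b}(u,v)$ and hence the diameter.

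First I would invoke the hypothesis $d_G(u,v)\le a$ to fix a shortest $u$-$v$ path $u=v_0,v_1,\ldots,v_\ell=v$ in $G$ of some length $\ell\le a$. The key observation is that in $G^b$ any two vertices $v_i,v_j$ with $|i-j|\le b$ are adjacent, since $d_G(v_i,v_j)\le|i-j|\le b$ by the definition of $G^b$. I would then consider the ``compressed'' sequence $v_0,v_b,v_{2b},\ldots,v_{kb},v_\ell$, where $k=\lfloor \ell/b\rfloor$, consisting of every $b$-th vertex along the path together with the endpoint $v_\ell$. Consecutive vertices in this sequence lie at distance at most $b$ in $G$, hence are joined by an edge of $G^b$, so the sequence is a walk in $G^b$.

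The number of edges in this walk is exactly $\lceil \ell/b\rceil$ (with the convention that if $b$ divides $\ell$ the last term $v_\ell=v_{kb}$ is already in the sequence and we have $\ell/b$ edges). Since $\ell\le a$ and $\lceil\cdot/b\rceil$ is monotone, this is at most $\lceil a/b\rceil$. Because $u,v$ were arbitrary, $\mathrm{diam}(G^b)\le\lceil a/b\rceil$, which is the claim.

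There is no real obstacle here: the statement is a straightforward ``packing'' argument that any $G$-path can be replaced in $G^b$ by a walk whose length is scaled down by a factor of $b$, up to rounding. The only mild care needed is the off-by-one in the ceiling when $b\nmid \ell$, which is handled by explicitly appending the endpoint $v_\ell$ to the subsampled sequence.
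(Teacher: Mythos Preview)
Your proof is correct and follows essentially the same approach as the paper: both take a shortest $u$--$v$ path in $G$, subsample every $b$-th vertex along it, observe that consecutive subsampled vertices are adjacent in $G^b$, and count the resulting edges to get the $\lceil a/b\rceil$ bound. If anything, your version is slightly more careful about the endpoint and the ceiling when $b\nmid\ell$.
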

\begin{proof}
Consider a path $P$ of (maximum) length $\le a$ between two vertices $v,u$ in $G$. Taking the $b$-power of $G$ adds all edges between vertices of $P$ at distance $\le b$. This means vertex $v$ will be adjacent in $G^b$ to a vertex $x_1$ on $P$ that was at distance $b$ from $u$ in $G$. This vertex $x_1$ will be in turn adjacent to another vertex $x_2$ on $P$ that was at distance $2b$ from $v$ in $G$. Carrying on like this we can find a sequence $x_1,\dots,x_i$ of vertices of $P$, each at distance $b$ in $G$ from its predecessor and follower in the sequence, that form a path $P'$ in $G^b$. Since the length of $P$ is at most $a$, the maximum number $i$ of vertices in the sequence until we reach $u$ is $\lceil a/b\rceil$, giving the length of $P'$ in $G^b$.
\end{proof}

We are now ready to argue about the maximum degree of any vertex in the instances built by our construction.

\begin{lemma}\label{gadget_degree}
 The maximum degree of any vertex in $G'$ is $\le\delta+3\delta^{1+2\epsilon_1/d}$ for even $d$ and $\le6\delta^{1+2\epsilon_1/d}$ for odd $d$.
\end{lemma}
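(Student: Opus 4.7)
The plan is to bound $\deg_{G'}(x)$ for an arbitrary vertex $x\in V(G')$ by classifying $x$ according to (i) the gadget $T(v)$ containing it (or the two gadgets sharing it, in the odd-$d$ identified-leaf case) and (ii) its height-level $i$ within $T(v)$. The edges at $x$ split into three groups: the \emph{tree} edges (parent and children in the regular tree), the \emph{cross} edges linking leaves of different gadgets, and the \emph{horizontal} edges at height-level $i$ (either clique edges for $i<1+\epsilon_2$ or the edges of the power graph $H_i(v)^b$ for $i\ge 1+\epsilon_2$). I would then bound each group in turn and sum.

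First I would observe that the tree-and-cross contribution is always at most $\delta+1$. Indeed, internal vertices have one parent and $\delta$ children, the root has only $\delta$ children, and a leaf has one parent plus $\delta$ cross edges in the even-$d$ case. In the odd-$d$ case a shared leaf has no cross edges but instead a second parent edge (to a root in the other gadget it sits in), so its tree-type degree is only $2$.

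Next I would bound the horizontal contribution at level $i$. For $i<1+\epsilon_2$, $T_i(v)$ is a clique of size $\delta^i\le\delta^{1+\epsilon_2}$, and using the identity $\delta^{1+\epsilon_2}=3\delta^{1+2\epsilon_1/d}$ (coming from $\epsilon_2=2\epsilon_1/d+\log_\delta 3$) this is at most $3\delta^{1+2\epsilon_1/d}-1$. For $i\ge 1+\epsilon_2$, the horizontal edges at $x$ are those of $H_i(v)^b$, where $H_i(v)$ is a cycle-plus-matching subgraph of maximum degree at most $3$ and $b=\lceil(1+2\epsilon_1/d)\log_2\delta\rceil$. Applying the Moore bound to this host, the number of vertices at distance $\le b$ from $x$ in $H_i(v)$ is at most $1+3\sum_{j=0}^{b-1}2^j=3\cdot 2^b-2$; combined with $2^b\le 2\delta^{1+2\epsilon_1/d}$ this gives a horizontal contribution bounded by a constant multiple of $\delta^{1+2\epsilon_1/d}$, tighter for even $d$ (where the power-graph regime and the clique regime can both be absorbed into $3\delta^{1+2\epsilon_1/d}$) than for odd $d$.

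Summing the tree/cross and horizontal contributions in each case yields the stated bounds. The main bookkeeping difficulty I foresee is the odd-$d$ shared leaf: there $x$ lies in two gadgets simultaneously and so receives horizontal edges from \emph{both} corresponding power graphs $H_{\lfloor d/2\rfloor}(v)^b$ and $H_{\lfloor d/2\rfloor}(u)^b$, essentially doubling the Moore-bound estimate. Verifying that even after this doubling the total still fits within $6\delta^{1+2\epsilon_1/d}$ — crucially using that a shared leaf's tree-type degree is only $2$, rather than $\delta+1$, and absorbing the $\delta$ into the multiplicative constant — will be the delicate step. Once that worst case is resolved, the remaining positions of $x$ are bounded by the same accounting and give the milder even-$d$ bound $\delta+3\delta^{1+2\epsilon_1/d}$.
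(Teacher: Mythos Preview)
Your proposal is correct and follows essentially the same approach as the paper: decompose the degree into the $\delta+1$ tree/cross edges plus the horizontal contribution, bound the latter by a Moore-type count in the degree-$3$ cycle-plus-matching host taken to its $b$-th power, and identify the odd-$d$ shared leaf (with two parent edges but a doubled horizontal contribution from $H_{\lfloor d/2\rfloor}(v)$ and $H_{\lfloor d/2\rfloor}(u)$) as the worst case. The paper's proof is terser but carries out exactly this accounting, arriving at $2+2(3\delta^{1+2\epsilon_1/d}-1)=6\delta^{1+2\epsilon_1/d}$ for the shared leaf and $\delta+3\delta^{1+2\epsilon_1/d}$ in the even case.
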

\begin{proof}
 Observe that for $d$ even, the degree of any vertex is bounded by the sum of the $\delta+1$ edges of the tree plus the number of edges added by the power graph (including the three edges of the cycle and matching): $\sum_{k=0}^{\lceil(1+2\epsilon_1/d)\log_2(\delta)\rceil}(3\cdot2^k)=3\cdot2^{\lceil(1+2\epsilon_1/d)\log_2(\delta)-1\rceil}-1\le3\cdot2^{(1+2\epsilon_1/d)\log_2(\delta)}-1=3\cdot\delta^{1+2\epsilon_1/d}-1$, for a total of $\delta+3\delta^{1+2\epsilon_1/d}$.

For $d$ odd, we note that the degree of all other vertices is strictly lower than that of the ``shared'' leaves between gadgets, since each leaf between two gadgets $T(v),T(u)$ (representing the edge $(v,u)$ of $G$) will belong to two subgraphs $H_{\lfloor d/2\rfloor}(v)$ and $H_{\lfloor d/2\rfloor}(u)$. Thus their degree will be $2+2(3\cdot\delta^{1+2\epsilon_1/d}-1)=6\delta^{1+2\epsilon_1/d}$.
\end{proof}

We then bound the diameter of our gadgets in order to guarantee that the solutions in our reduction will be well-formed. Our statement is probabilistic and conditional on our assumption on the size of $\Delta$ as being sufficiently large.

\begin{lemma}\label{gadget_diameter}
 With high probability, the diameter of each gadget $T(v)$ is $d/2-1$ for even $d$ and $\lfloor d/2\rfloor$ for odd $d$, for sufficiently large $\Delta$.
\end{lemma}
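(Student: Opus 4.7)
The plan is to bound the within-level diameter at each height-level of each gadget first, and then combine these bounds with the tree backbone to obtain the overall diameter of $T(v)$. For the within-level bound, at levels $i<1+\epsilon_2$ the set $T_i(v)$ was made into a clique, giving within-level diameter at most $1$. For larger levels $i$, $|T_i(v)|=\delta^i$, so Theorem~\ref{exp_match_thm} yields that the cycle-plus-matching graph $H_i(v)$ has diameter at most $i\log_2(\delta)+\log_2(i\log_2(\delta))+c$ with high probability, and Lemma~\ref{exp_power_lem} applied with $b=\lceil(1+2\epsilon_1/d)\log_2(\delta)\rceil$ then shows the within-level diameter at level $i$ is at most $\lceil(i\log_2(\delta)+\log_2(i\log_2(\delta))+c)/b\rceil$. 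The assumption $\epsilon_1\ge\frac{d(\log(\log(\Delta))+c)}{4\log(\Delta)/d}$ from the construction is calibrated exactly so that the logarithmic correction is absorbed and this ceiling evaluates to at most $i$ for every level $i\ge 1$.

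Once the within-level diameter at every level $i$ is at most $i$, the overall diameter of $T(v)$ is straightforward to control. For any $x\in T_i(v)$ and $y\in T_j(v)$ with $i\le j$, let $a_y$ denote the (unique) ancestor of $y$ at level $i$ in the tree. Then $x$ can reach $a_y$ in at most $i$ within-level steps at level $i$, and $a_y$ reaches $y$ in exactly $j-i$ tree steps going down, for a total of at most $i+(j-i)=j\le h$, where $h=d/2-1$ for even $d$ and $h=\lfloor d/2\rfloor$ for odd $d$. This yields the upper bound on the gadget diameter. For the matching lower bound, note that the root $t_v$ is the sole vertex at level $0$ and is therefore incident to no within-level edge, so every path from $t_v$ to any leaf must consist of exactly $h$ tree edges, and thus the diameter is exactly $h$.

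Finally, Theorem~\ref{exp_match_thm} is invoked at most $O(d|V|)$ times in total (once per non-clique level of each of the $|V|$ gadgets), and a union bound over these events ensures that all within-level diameter bounds hold simultaneously with high probability. The main delicate step is the first one: tracking the additive $\log_2(i\log_2(\delta))+c$ slack of Theorem~\ref{exp_match_thm} through the division-by-$b$ of Lemma~\ref{exp_power_lem} and verifying that the resulting ceiling does not exceed $i$ for every non-clique level, which is exactly what the lower bound on $\Delta$ (equivalently, the inequality on $\epsilon_1$) fixed at the outset of the construction is designed to enforce.
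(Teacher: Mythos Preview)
Your proposal is correct and follows essentially the same approach as the paper: you bound the within-level diameter at each height-level $i$ by $i$ using Theorem~\ref{exp_match_thm} together with Lemma~\ref{exp_power_lem} (and the clique at small levels), and then route between $x\in T_i(v)$ and $y\in T_j(v)$ with $i\le j$ by first moving within level $i$ to the level-$i$ ancestor of $y$ and then descending the tree, exactly as the paper does. Your additions (the explicit union bound over the $O(d|V|)$ invocations of Theorem~\ref{exp_match_thm} and the observation that the root-to-leaf distance realizes the height as a lower bound) are minor clarifications rather than a different method.
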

\begin{proof}
 First, observe that for sufficiently large $n$, $c\le10$ and $\epsilon_1\in(0,\lfloor d/2\rfloor)$, it is $\log_2(\log(n))+c<(2\epsilon_1/d)\log_2(n)$. For even $d$, our construction uses $n$-cycles of length $n=\delta^i$ for each $i\in[1+\epsilon_2,d/2-1]$, meaning that $\Delta$ must be sufficiently large for $\epsilon_1\ge\frac{d(\log(2i\log(\Delta)/d)+c)}{4i\log(\Delta)/d}$, while for odd $d$ it is $i\in[1+\epsilon_2,\lfloor d/2\rfloor]$. As noted above, our assumption for $\Delta$ requires that it is sufficiently large for $\epsilon_1\ge\frac{d(\log(\log(\Delta))+c)}{4\log(\Delta)/d}$, which is $>\frac{d(\log(2i\log(\Delta)/d)+c)}{4i\log(\Delta)/d}$ for the required range of $i$ in both cases.
 
 By Theorem~\ref{exp_match_thm}, the distance between any pair of vertices at height-level $i$ after adding the edges of $P_i(v)$ is at most $\log_2(\delta^i)+\log_2(\log(\delta^i))+c$ (with high probability). This is $<(1+2\epsilon_1/d)\log_2(\delta^i)$ for sufficiently large $\Delta$. By Lemma~\ref{exp_power_lem}, taking the $\lceil((1+2\epsilon_1/d)\log_2(\delta))\rceil$-power of $H_i(v)$ shortens the distance to at most $\frac{(1+2\epsilon_1/d)\log_2(\delta^i)}{\lceil(1+2\epsilon_1/d)\log_2(\delta)\rceil}\le i$, for each height-level $i\in[1+\epsilon_2,d/2-1]$. For smaller values of $i$, the vertices of each height-level form a clique and the distance between any pair of them is thus at most 1.
 
 For odd values of $d$, the size $n$ of the cycles we use is again $\delta^i$, with $i\in[1+\epsilon_2,\lfloor d/2\rfloor]$ and we thus have once more that for sufficiently large $\Delta$ the distance between any pair of vertices after adding the edges of $P_i(v)$ to each height-level $i$ of each gadget $T(v)$ is at most $(1+2\epsilon_1/d)\log_2(\delta^i)$ (with high probability) and at most $i$ after taking the $\lceil((1+2\epsilon_1/d)\log_2(\delta))\rceil$-power of $H_i(v)$. Again, for smaller $i<1+\epsilon_2$, $T_i(v)$ is a clique.
 
 Since at each height-level $i$, no pair of vertices is at distance $>i$ with $i\le d/2-1$ for even $d$ and $i\le\lfloor d/2\rfloor$ for odd $d$, the distance between any vertex $x$ at some height-level $i_x$ to another vertex $y$ at height-level $i_y>i_x$ will be at most $i_x$ from $x$ to the root of the subtree of $T(v)$ (at level $i_x$) that contains $y$. From there to $y$ it will be at most $d/2-1-i_x$ for even $d$ and at most $\lfloor d/2\rfloor-i_x$ for odd $d$. Furthermore, the distance from the root of $T(v)$ to a leaf is exactly $d/2-1$ for even $d$ and exactly $\lfloor d/2\rfloor$ for odd $d$.
\end{proof}

We finalize our argument with a series of lemmas leading to the proof of Theorem~\ref{delta_inapprox}, that detail the behaviour of solutions that can form in our construction, relative to the independence of vertices in the original graph.

\begin{lemma}\label{gadget_exclusion}
 No \dS\ in $G'$ can contain a vertex from gadget $T(v)$ and a vertex from gadget $T(u)$, if $(u,v)\in E$.
\end{lemma}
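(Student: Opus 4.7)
}
The plan is to bound the distance in $G'$ between an arbitrary vertex $x\in T(v)$ and an arbitrary vertex $y\in T(u)$, whenever $(u,v)\in E$, and show that this distance is strictly smaller than $d$. Since a $d$-scattered set requires every pair of selected vertices to be at distance at least $d$, this immediately rules out any solution taking one vertex from each of two adjacent gadgets.

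I would split into the two parity cases and in each invoke Lemma~\ref{gadget_diameter}, which (with high probability) gives diameter $d/2-1$ for each gadget when $d$ is even and diameter $\lfloor d/2\rfloor$ when $d$ is odd. For even $d$, the construction attaches $T(v)$ to $T(u)$ through a single edge $(x_v,y_u)$ between a leaf $x_v\in T(v)$ and a leaf $y_u\in T(u)$; so any path from $x$ to $y$ can be built by going from $x$ to $x_v$ inside $T(v)$, crossing this bridging edge, and then going from $y_u$ to $y$ inside $T(u)$, yielding
\[
d_{G'}(x,y)\ \le\ (d/2-1)\ +\ 1\ +\ (d/2-1)\ =\ d-1.
\]
For odd $d$, the two gadgets instead share a single leaf $\ell$ (of degree 2) at distance $\lfloor d/2\rfloor$ from both $t_v$ and $t_u$; then
\[
d_{G'}(x,y)\ \le\ \lfloor d/2\rfloor + \lfloor d/2\rfloor\ =\ d-1,
\]
using the diameter bound inside each gadget to reach $\ell$.

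In both cases $d_{G'}(x,y)\le d-1<d$, so $x$ and $y$ cannot coexist in a $d$-scattered set, proving the lemma. The only subtlety worth flagging is that the distances inside each gadget are measured in $G'$ and not just in the subgraph $T(v)$: since $G'$ only adds edges incident to the leaves (either the bridging edge or shared leaves), these extra edges could only ever shorten the intra-gadget distances, so the bound from Lemma~\ref{gadget_diameter} still applies. Otherwise the argument is a direct diameter computation, and I do not expect any real obstacle beyond keeping the parity cases straight.
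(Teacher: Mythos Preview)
Your proposal is correct and follows essentially the same route as the paper: in both parity cases you bound the distance between an arbitrary $x\in T(v)$ and $y\in T(u)$ by routing through the bridging edge (even $d$) or the shared leaf (odd $d$), invoking Lemma~\ref{gadget_diameter} for the intra-gadget legs to get $d_{G'}(x,y)\le d-1$. Your added remark that passing from $T(v)$ to $G'$ can only shorten distances is a clean way to justify using the gadget-diameter bound in the ambient graph, which the paper leaves implicit.
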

\begin{proof}
 Since $(u,v)\in E$, there is an edge $(x_v,y_u)\in E'$ between a leaf $x_v\in T(v)$ and $y_u\in T(u)$ for even $d$, while for for odd $d$ the leaf $x$ belongs to both $T(v),T(u)$ and is at distance $\lfloor d/2\rfloor$ from each of their roots. Thus for even $d$ the maximum distance from any vertex of $T(v)$ to $y_u\in T(u)$ is $d/2-1+1=d/2$, by Lemma~\ref{gadget_diameter}, and for odd $d$ this is $\lfloor d/2\rfloor$. Since, by the same lemma, the diameter of $T(u)$ is $d/2-1$ for even $d$ and $\lfloor d/2\rfloor$ for odd $d$, there is no vertex of $T(u)$ that can be in any \dS\ along with any vertex of $T(v)$, as the maximum distance is $\le d/2+d/2-1=d-1$ for even $d$ and $\le\lfloor d/2\rfloor+\lfloor d/2\rfloor=d-1$ for odd $d$.
\end{proof}

\begin{lemma}\label{gadget_inclusion}
 If $(u,v)\notin E$, then the distance between the root $t_v$ of $T(v)$ and the root $t_u$ of $T(u)$ is at least $d$.
\end{lemma}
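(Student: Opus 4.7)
The plan is to lower-bound the length of every path $P$ from $t_v$ to $t_u$ in $G'$ by $d$, by partitioning $P$ into its portions in $T(v)$, $T(u)$, and intermediate gadgets. Two structural facts drive the argument. First, within any single gadget $T(w)$, each edge is either a tree edge that changes the height-level by exactly one or an overlay edge that preserves it, so the root $t_w$ is at distance at least $i$ from any vertex at height-level $i$ of $T(w)$; in particular, it is at distance at least $d/2-1$ (even $d$) or $\lfloor d/2\rfloor$ (odd $d$) from any leaf. Second, by construction each leaf of a gadget carries at most one external edge (even $d$), or is shared with at most one other gadget (odd $d$), and each such external connection corresponds to a distinct edge of $G$.

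I would let $\ell_v$ denote the last vertex of $P$ in $T(v)$ and $\ell_u$ the first vertex of $P$ in $T(u)$; both must be leaves of their respective gadgets. The initial piece of $P$ from $t_v$ to $\ell_v$ has length at least $d/2-1$ (resp.\ $\lfloor d/2\rfloor$), and similarly for the final piece from $\ell_u$ to $t_u$. Any excursion of $P$ that leaves and re-enters $T(v)$ (or $T(u)$) only inflates these counts, since leaving the gadget already requires walking at least to a leaf and then paying two further external transitions. Because $(u,v)\notin E$, the middle segment of $P$ between $\ell_v$ and $\ell_u$ cannot be realized by a single direct transition, so it must visit at least one intermediate gadget $T(w)$.

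For even $d$, this middle segment uses at least two external edges, which must be incident to two distinct leaves of $T(w)$ (no leaf carries two external edges), so it contributes at least one internal edge in $T(w)$ on top of the two external ones, yielding $|P|\geq 2(d/2-1)+3=d+1$. For odd $d$, $\ell_v$ and $\ell_u$ are distinct shared leaves: if they coincided, the common vertex would witness the edge $(u,v)\in E$, contradicting the hypothesis; hence the subpath between them inside $T(w)$ has length at least one, giving $|P|\geq 2\lfloor d/2\rfloor+1=d$.

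The main subtlety will be ruling out clever shortcuts that alternate between several gadgets or backtrack before finally entering $T(u)$; this is why I isolate the last-in/first-out crossings $\ell_v,\ell_u$ rather than inducting on the sequence of visited gadgets. Every additional intermediate gadget inserted in the middle only adds further pairs of external transitions and forces an internal step between two distinct leaves, so the two-phase routing analyzed above is the worst case and the bound $d$ is attained exactly (for odd $d$) or strictly exceeded (for even $d$).
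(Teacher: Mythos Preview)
Your approach is essentially the paper's: split any $t_v$--$t_u$ path into a root-to-leaf piece in $T(v)$, a middle segment, and a leaf-to-root piece in $T(u)$, and lower-bound each piece. The level argument for the two outer pieces and the odd-$d$ analysis are correct and match the paper.

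There is, however, a factual error in your even-$d$ analysis. You assert that ``each leaf of a gadget carries at most one external edge (even $d$)'', and from this derive that the middle segment must contain an internal edge of $T(w)$ between two distinct leaves, yielding $|P|\ge 2(d/2-1)+3=d+1$. This is not how the construction works: for even $d$ each gadget has $\delta^{d/2-1}$ leaves, and \emph{each leaf carries $\delta$ external edges} (so that the total number of outgoing edges is $\delta^{d/2}=\Delta$). Consequently the two external edges $\ell_v\to z$ and $z'\to\ell_u$ may well satisfy $z=z'$, and the middle segment can have length exactly~$2$. The paper's proof uses precisely this: the shortest route between a leaf of $T(v)$ and a leaf of $T(u)$ is via a single leaf $z_w$ of some $T(w)$ with $w$ adjacent to both $u$ and $v$, giving $|P|\ge 2(d/2-1)+2=d$. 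Your conclusion for the lemma survives, but the bound $d+1$ you claim is too strong and the supporting structural claim is false; replace it with the correct count of~$2$ for the middle segment.
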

\begin{proof}
 Since $(u,v)\notin E$, then there is no edge between any pair of leaves $x_u$ of $T(u)$ and $y_v$ of $T(v)$ for even $d$. Thus the shortest possible distance between any such pair of leaves is 2 for even $d$, through a third leaf $z_w$ of another gadget $T(w)$ corresponding to a vertex $w$ adjacent to both $u$ and $v$ in $G$. The distance from $t_v$ to any leaf of $T(v)$ is $d/2-1$ and the distance from $t_u$ to any leaf of $T(u)$ is also $d/2-1$. Thus the distance from $t_u$ to $t_v$ must be at least $d/2-1+d/2-1+2=d$.
 
 For odd $d$, there is no shared leaf $x$ between the two gadgets, i.e.\ at distance $\lfloor d/2\rfloor$ from both roots.
 Thus the distance between two leaves $x_u\in T(u)$ and $y_v\in T(v)$ is at least 1, if each of these is shared with a third gadget $T(w)$ corresponding to a vertex $w$ that is adjacent to both $u$ and $v$ in $G$. The distance from $t_v$ to any leaf of $T(v)$ is $\lfloor d/2\rfloor$ and the distance from $t_u$ to any leaf of $T(u)$ is also $\lfloor d/2\rfloor$. Thus the distance from $t_u$ to $t_v$ is at least $2\lfloor d/2\rfloor+1=d$. 
\end{proof}

\begin{lemma}\label{delta_FWD}
 For any independent set $S$ in $G$, there is a \dS\ $K$ in $G'$, with $|S|=|K|$.
\end{lemma}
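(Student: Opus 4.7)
The plan is to take the obvious candidate solution and verify scattering using the preceding lemma. Given an independent set $S$ in $G$, I would define
\[
K = \{\,t_v : v \in S\,\} \subseteq V(G'),
\]
where $t_v$ denotes the root of the gadget $T(v)$. Since the gadgets $T(v),T(u)$ are vertex-disjoint for $v\neq u$ (for even $d$ they share nothing; for odd $d$ the only shared vertices are leaves corresponding to edges of $G$, hence never roots), the map $v\mapsto t_v$ is injective, giving $|K|=|S|$.

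It remains to argue that $K$ is a valid $d$-scattered set. Pick any two distinct roots $t_v,t_u\in K$. Because $S$ is independent in $G$, we have $(u,v)\notin E$. Lemma~\ref{gadget_inclusion} then directly yields $d_{G'}(t_v,t_u)\ge d$, which is exactly the required scattering condition. Iterating over all pairs of elements of $K$ finishes the verification.

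There is essentially no obstacle here: Lemma~\ref{gadget_inclusion} was designed precisely so that roots of non-adjacent original vertices are far apart, so the forward direction of the reduction reduces to invoking it. The only minor care is to note that choosing roots (rather than arbitrary gadget vertices) is what makes the lower bound on distance tight at exactly $d$; choosing higher-level vertices inside $T(v)$ would also work but is unnecessary, and the roots give the cleanest argument.
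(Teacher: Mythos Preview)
Your proof is correct and follows essentially the same approach as the paper: define $K$ as the set of roots $t_v$ for $v\in S$, note $|K|=|S|$, and invoke Lemma~\ref{gadget_inclusion} for each pair to conclude $d_{G'}(t_v,t_u)\ge d$. The paper's version is terser (it simply asserts $|S|=|K|$ without the injectivity remark), but the argument is the same.
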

\begin{proof}
 Given an independent set $S$ in $G$, we let $K$ include the root vertex $t_v\in T(v)$ for each $v\in S$. Clearly, $|S|=|K|$. Since $S$ is independent, there is no edge $(u,v)$ between any pair $u,v\in S$ and thus, by Lemma~\ref{gadget_inclusion}, vertices $t_v$ and $t_u$ are at distance at least $d$.
\end{proof}

\begin{lemma}\label{delta_BWD}
 For any \dS\ $K$ in $G'$, there is an independent set $S$ in $G$, with $|K|=|S|$.
\end{lemma}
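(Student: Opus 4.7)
The plan is to associate each vertex $k \in K$ with some vertex of $G$ whose gadget contains $k$, and then show that the resulting set $S$ is both independent and of the same size as $K$. Concretely, I would define a map $f : K \to V(G)$ by sending $k$ to the unique $v$ with $k \in T(v)$ whenever $k$ belongs to a single gadget (the even-$d$ case in all situations, and the odd-$d$ case for internal vertices and non-shared leaves), and sending $k$ to an arbitrary one of the two possible vertices when $k$ is an odd-$d$ leaf shared between two gadgets $T(v)$ and $T(u)$. Set $S := f(K)$.

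The first step is to verify $|S| = |K|$, which amounts to showing that $f$ is injective. Suppose $f(k) = f(k') = v$ for distinct $k, k' \in K$. Then both $k$ and $k'$ lie in $T(v)$, so $d_{G'}(k,k') \le \mathrm{diam}(T(v))$. By Lemma~\ref{gadget_diameter}, this diameter equals $d/2 - 1$ for even $d$ and $\lfloor d/2 \rfloor$ for odd $d$, both of which are strictly less than $d$ for $d \ge 4$; this contradicts $K$ being a $d$-scattered set.

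The second step is to show $S$ is independent in $G$. If $v, v' \in S$ were adjacent in $G$, I would pick preimages $k, k' \in K$ with $f(k) = v$ and $f(k') = v'$; by the definition of $f$ we have $k \in T(v)$ and $k' \in T(v')$, and these must be distinct vertices of $K$ (otherwise $f(k) = f(k')$ would force $v = v'$). Lemma~\ref{gadget_exclusion} then directly forbids $K$ from containing such a pair, a contradiction.

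I do not anticipate a serious obstacle. The only mildly delicate point is the shared-leaf case for odd $d$, where $f$ requires an arbitrary tie-breaking choice between the two gadgets containing a shared leaf. This choice is harmless, however: Lemmas~\ref{gadget_diameter} and~\ref{gadget_exclusion} only use the fact that a given vertex of $K$ lies in a particular gadget, so either tie-breaking option makes both steps above go through verbatim, and combining them with Lemma~\ref{delta_FWD} then completes the correspondence $OPT_2(G) = OPT_d(G')$ needed for Theorem~\ref{delta_inapprox}.
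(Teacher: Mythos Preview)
Your proposal is correct and follows essentially the same approach as the paper: use Lemma~\ref{gadget_diameter} to show each gadget contributes at most one vertex to $K$, then use Lemma~\ref{gadget_exclusion} to conclude that the corresponding vertices of $G$ are pairwise non-adjacent. Your treatment is in fact slightly more careful than the paper's, since you explicitly handle the odd-$d$ shared-leaf case via tie-breaking, whereas the paper's definition of $S$ (all $v$ with $T(v)\cap K\neq\emptyset$) would, read literally, assign a shared leaf to two adjacent vertices of $G$.
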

\begin{proof}
 Given a \dS\ $K$ in $G$, we know there is at most one vertex from each gadget $T(v)$ in $K$, since its diameter is $d/2-1$ for even $d$ and $\lfloor d/2\rfloor$ for odd $d$, by Lemma~\ref{gadget_diameter}. Furthermore, for any two vertices $x,y\in K$, we know by Lemma~\ref{gadget_exclusion} that if $x\in T(u)$ and $y\in T(v)$ for gadgets corresponding to vertices $u,v\in V$, then $(u,v)\notin E$ and thus $u,v$ are independent in $G$. We let set $S$ contain each vertex $v\in V$ whose corresponding gadget $T(v)$ contains a vertex of $K$. These vertices are all independent and also $|K|=|S|$.
\end{proof}

\begin{proof}[Proof of Theorem~\ref{delta_inapprox}]
 We suppose the existence of a polynomial-time approximation algorithm for \dS\ with ratio $\Delta^{\lfloor d/2\rfloor-\epsilon_1}$ for graphs of maximum degree $\Delta$ and some $0<\epsilon_1<d/2$. We assume $\Delta$ is sufficiently large for $\epsilon_1\ge\frac{d(\log(\log(\Delta))+10)}{4\log(\Delta)/d}$.
 
 Starting from a formula $\phi$ of SAT on $N$ variables, where $N$ is also sufficiently large, i.e\ $N>\Delta^{1/(5+O(\epsilon'))}$ (where $\epsilon'$ is defined below), we use Theorem~\ref{gap_reduction} to produce an instance $G=(V,E)$ of \textsc{Independent Set} on $|V|=N^{1+\epsilon'}\Delta^{1+\epsilon'}$ vertices and of maximum degree $\Delta$, such that with high probability: if $\phi$ is satisfiable, then $\alpha(G)\ge N^{1+\epsilon'}\Delta$; if $\phi$ is not satisfiable, then $\alpha(G)\le N^{1+\epsilon'}\Delta^{2\epsilon'}$. Thus approximating \textsc{Independent Set} in polynomial time on $G$ within a factor of $\Delta^{1-2\epsilon'}$, for $\epsilon'>0$, would permit us to decide if $\phi$ is satisfiable, with high probability.
 
 We next use the above construction to create an instance $G'$ of \dS\ where the degree is bounded by $6\delta^{1+2\epsilon_1/d}=\delta^{1+\epsilon_2}$, for $\epsilon_2>2\epsilon_1/d$, by Lemma~\ref{gadget_degree}. Slightly overloading notation, we let $\epsilon_3\ge\epsilon_2$ be such that $\delta^{1+\epsilon_2}=(\lceil\Delta^{\frac{1}{\lfloor d/2\rfloor}}\rceil)^{1+\epsilon_2}=(\Delta^{\frac{1}{\lfloor d/2\rfloor}})^{1+\epsilon_3}$.
 We now let $\epsilon'=\frac{\epsilon_1(1+\epsilon_3)-\epsilon_3\lfloor d/2\rfloor}{2\lfloor d/2\rfloor}$. Note that $\epsilon'>0$, since $\epsilon_3\ge\epsilon_2>2\epsilon_1/d$.
 
 We then apply the supposed approximation for \dS\ on $G'$. 
 This returns a solution at most $(\delta^{1+\epsilon_2})^{\lfloor d/2\rfloor-\epsilon_1}=\Delta^{(1-\frac{\epsilon_1}{\lfloor d/2\rfloor})(1+\epsilon_3)}=\Delta^{1-\frac{\epsilon_1(1+\epsilon_3)-\epsilon_3\lfloor d/2\rfloor}{\lfloor d/2\rfloor}}=\Delta^{1-2\epsilon'}$ from the optimum. By Lemma~\ref{delta_BWD} we can find a solution for \textsc{Independent Set} in $G$ of the same size, i.e.\ we can approximate $\alpha(G)$ within a factor of $\Delta^{1-2\epsilon'}$, again, with high probability (as Lemma~\ref{gadget_diameter} is also randomized). This would allow us to decide if $\phi$ is satisfiable and thus solve SAT in polynomial time with two-sided bounded errors, implying NP$\subseteq$BPP.
\end{proof}

\subsection{Approximation}\label{dSS_poly-approx-sec}

We next show that any (degree-based) greedy polynomial-time approximation algorithm for \dS\ achieves a ratio of $O(\Delta^{\lfloor d/2\rfloor})$, thus improving upon the analysis of \cite{EtoILM16} and the $O(\Delta^{d-1})$- and $O(\Delta^{d-2}/d)$-approximations given therein.

Our strategy is to bound the size of the largest $d$-scattered set in any graph of maximum degree at most $\Delta$ and radius at most $d-1$, centered on some vertex $v$. The idea is that in one of its iterations our greedy algorithm would select $v$ and thus exclude all other vertices within distance $d-1$ from $v$, yet an upper bound on the size of the largest possible $d$-scattered set can guarantee that the ratio of our algorithm will not be too large. 

The following definition of our ``merge'' operation (see also Figure~\ref{fig:merge}) will allow us to consider all possible graphs of a given radius and degree and provide upper bounds on the size of the optimal solution in such graphs. These bounds on the size of the optimal are then used to compare it to those solutions produced by our greedy scheme.

\begin{figure}[htbp]
 \centerline{\includegraphics[width=90mm]{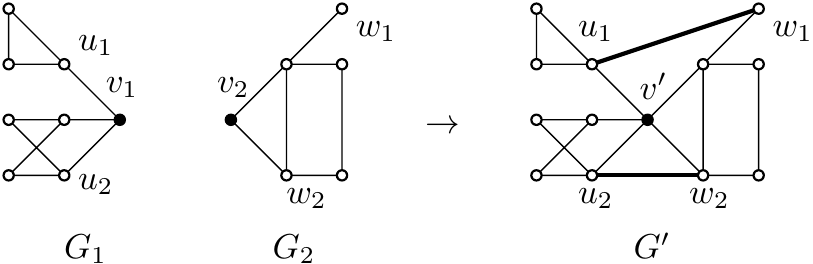}}
 \caption{An example graph $G'=M_{G_1}^{G_2}(v_1,v_2,[u_1,u_2],[w_1,w_2])$ for $G_1,G_2$ shown on the left, with edges added by the third merge operation shown in bold.}
 \label{fig:merge}
 \end{figure}

\begin{definition}[Merge operation]\label{merge_operation}
 For two connected graphs $G_1=(V_1,E_1),G_2=(V_2,E_2)$, the \emph{merged} graph $M_{G_1}^{G_2}(v_1,v_2,{\bf U},{\bf W})$, where ${\bf U}=[u_1,\dots,u_{k_1}]$, ${\bf W}=[w_1,\dots,w_{k_2}]$ are ordered (possibly empty and with repetitions allowed) sequences of vertices from $V_1$ and $V_2$, respectively, is defined as the graph $G'=(V',E')$ obtained by:
 \begin{enumerate}
  \item Identification of vertex $v_1\in G_1$ and vertex $v_2\in G_2$, i.e.\ $V'$ is composed of the union of $V_1,V_2$ after removal of vertices $v_1,v_2$ and addition of a new vertex $v'$.
  \item Replacement of all edges of $v_1,v_2$ by new edges with $v'$ as the new endpoint, i.e.\ $E'$ contains all edges of $E_1,E_2$, where any edges incident on $v_1$ or $v_2$ are now incident on $v'$.
  \item Addition of a number of edges between vertices of $G_1,G_2$, i.e.\ $E'$ also contains one edge between every pair $(u_i,w_j)$ from ${\bf U,W}$, for $i=j$.
 \end{enumerate}
\end{definition}

\begin{lemma}\label{delta_ball}
 The maximum size of a $d$-scattered set in any graph of maximum degree at most $\Delta$ and radius at most $\lceil d/2\rceil$ centered on some vertex $v$ is at most $\Delta$.
\end{lemma}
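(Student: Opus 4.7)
The plan is to injectively map each element of any $d$-scattered set $K\subseteq V(G)$ to a distinct neighbor of the center vertex $v$, which immediately yields $|K|\leq \deg(v)\leq \Delta$. First I would handle the degenerate cases: if $|K|\leq 1$ there is nothing to show, and if $d=1$ the statement is trivial since every graph is $1$-scattered. For $d\geq 2$, the vertex $v$ itself cannot lie in $K$ together with any other vertex $u$, since $d(v,u)\leq \lceil d/2\rceil < d$, so we may assume $v\notin K$.

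The central step is the construction of the injection. For each $u\in K$ I would fix a shortest path from $v$ to $u$ in $G$ and let $w_u$ denote its second vertex; since $u\neq v$ this is a well-defined neighbor of $v$. To verify injectivity, suppose for contradiction that two distinct $u_1,u_2\in K$ share the same $w := w_{u_1}=w_{u_2}$. Using the suffixes of the chosen paths starting at $w$, the length of the suffix to $u_i$ is exactly $d(v,u_i)-1\leq \lceil d/2\rceil-1$, so $d(w,u_i)\leq \lceil d/2\rceil-1$ for $i=1,2$. The triangle inequality then gives
\[
d(u_1,u_2)\;\leq\; d(u_1,w)+d(w,u_2)\;\leq\; 2\lceil d/2\rceil-2\;\leq\; d-1,
\]
contradicting the assumption that $K$ is $d$-scattered. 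Note that $2\lceil d/2\rceil-2$ equals $d-2$ for even $d$ and $d-1$ for odd $d$, so the bound is uniformly strict and the parity of $d$ only affects the slack, not the validity of the argument.

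Concluding, the map $u\mapsto w_u$ is an injection from $K$ into $N(v)$, hence $|K|\leq |N(v)|\leq \Delta$. I do not anticipate any substantive obstacle: the only subtle point is confirming the strict inequality $2\lceil d/2\rceil-2<d$ in both parities, which is immediate. The assumption that $v$ has \emph{eccentricity} (rather than merely \emph{degree}) at most $\lceil d/2\rceil$ is exactly what is needed so that \emph{every} vertex of $K$ contributes a neighbor of $v$ via a path of length at most $\lceil d/2\rceil$, making the pigeonhole argument over $N(v)$ go through.
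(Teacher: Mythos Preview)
Your proof is correct and takes essentially the same approach as the paper: both show that shortest paths from $v$ to distinct elements of $K$ must leave $v$ through distinct neighbors, giving $|K|\leq\deg(v)\leq\Delta$. Your explicit injection $u\mapsto w_u$ is a cleaner packaging than the paper's phrasing in terms of paths that ``can only share vertex $v$'', and it bypasses the paper's preliminary observation locating the elements of $K$ at distance $\geq\lfloor d/2\rfloor$ from $v$; the only quibble is your $d=1$ aside (the lemma actually fails there, e.g.\ a star on $\Delta+1$ vertices), but this case is irrelevant since the paper works with $d\geq 2$ throughout.
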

\begin{proof}
 Consider a graph of maximum degree $\Delta$ and radius $\lceil d/2\rceil$ centered on some vertex $v$: the only pairs of vertices at distance $d$ from each other must be at distance $\ge\lceil d/2\rceil$ from $v$ (or $\lfloor d/2\rfloor$ in one side for odd $d$), as any vertex $u$ at distance $<\lfloor d/2\rfloor$ from $v$ will be at distance $<d$ from any other vertex  $z$ in the graph, since $z$ is at distance $\le\lceil d/2\rceil$ from $v$ (due to the graph's radius).
 Furthermore, for every vertex in the $d$-scattered set, there must be an edge-disjoint path of length at least $\lceil d/2\rceil$ to $v$ that is not shared with any other such vertex, i.e.\ these paths can only share vertex $v$ at distance $\lceil d/2\rceil$ from their endpoints (the vertices that can be in a $d$-scattered set). As the degree of $v$ is bounded by $\Delta$, the number of such disjoint paths also cannot be more than $\Delta$.
\end{proof}

\begin{lemma}\label{ball_join}
 Given two graphs $G_1,G_2$, for $G'= M_{G_1}^{G_2}(v_1,v_2,{\bf U},{\bf W})$ and any ${\bf U},{\bf W}$, it is $OPT_d(G')\le OPT_d(G_1)+OPT_d(G_2)$.
\end{lemma}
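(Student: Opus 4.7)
The plan is to take an optimal $d$-scattered set $K$ in $G'$, split it into a ``$G_1$-part'' $K_1$ and a ``$G_2$-part'' $K_2$ according to where each vertex originates under the merge, and show that each part is itself a valid $d$-scattered set in the corresponding original graph. Summing the two sides will then give $|K|\le |K_1|+|K_2|\le OPT_d(G_1)+OPT_d(G_2)$, which is exactly the claim.

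More concretely, let $\iota_i\colon V_i\to V'$ be the natural injective map induced by the merge, sending $v_i$ to the identified vertex $v'$ and fixing every other vertex of $V_i$. Define $K_i\coloneqq \iota_i^{-1}(K)\subseteq V_i$ for $i=1,2$. Since the images of $\iota_1$ and $\iota_2$ overlap only at $v'$, every vertex of $K$ is accounted for at least once in $K_1\cup K_2$, and we obtain $|K_1|+|K_2|\ge|K|$, with equality unless $v'\in K$ (in which case $v'$ is double-counted). The crux is then the distance comparison
\begin{equation*}
d_{G_i}(x,y)\;\ge\;d_{G'}\bigl(\iota_i(x),\iota_i(y)\bigr)\qquad\text{for all }x,y\in V_i,
\end{equation*}
which holds because every edge of $E_i$ reappears in $E'$ (possibly with $v_i$ renamed to $v'$ by step~2 of the merge), so any $x$-$y$ path in $G_i$ maps under $\iota_i$ to a walk of the same length in $G'$. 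Applied to any two distinct $x,y\in K_i$, the images $\iota_i(x),\iota_i(y)\in K$ lie at distance at least $d$ in $G'$, so $d_{G_i}(x,y)\ge d$, i.e.\ $K_i$ is a $d$-scattered set in $G_i$ and hence $|K_i|\le OPT_d(G_i)$.

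There is essentially no obstacle here beyond bookkeeping around the shared vertex $v'$, which is dealt with by allowing it to be assigned to both $K_1$ and $K_2$; the resulting slack of at most one is absorbed by the inequality. The extra cross edges added in step~3 of the merge from $\mathbf{U}$ and $\mathbf{W}$ play no role beyond reinforcing the ``distances in $G'$ can only shrink'' direction, so the argument is insensitive to the choice of $\mathbf{U},\mathbf{W}$, as required.
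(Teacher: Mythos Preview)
Your proof is correct and follows essentially the same approach as the paper: split an optimal $d$-scattered set of $G'$ into its $G_1$- and $G_2$-parts, observe that the merge operation can only shorten distances (so each part remains $d$-scattered in the corresponding $G_i$), and sum. The only difference is cosmetic---the paper phrases the same argument by contradiction, whereas your direct formulation via the injections $\iota_i$ and the inequality $|K|\le|K_1|+|K_2|$ is cleaner.
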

\begin{proof}
 Assume for the sake of contradiction that $OPT_d(G')>OPT_d(G_1)+OPT_d(G_2)$ and let $S$ denote an optimum $d$-scattered set in $G'$ of this size, with $S_1=S\cap \{V_1\setminus \{v_1\}\cup \{v'\}\}$ and $S_2=S\cap \{V_2\setminus\{v_2\}\cup\{v'\}\}$ denoting the parts in $G_1,G_2$, respectively. Since $|S|>OPT_d(G_1)+OPT_d(G_2)$, then for any pair of optimal $K_1\subseteq V_1$, $K_2\subseteq V_2$ in $G_1,G_2$, there must be at least one vertex $s$ in $S$ for which $s\notin K_1$ and $s\notin K_2$, but it must be $s\in S_1$ or $s\in S_2$ (or both, if $s=v')$.
 
 Observe that the distance between any pair of vertices in $G_1$ (the same holds for $G_2$) cannot increase in $G'$ after the merge operation, since identification of a pair of vertices between two graphs and addition of any number of edges between the two can only decrease their distance Thus if $s\in V_1$, then $S_1$ is also a $d$-scattered set in $G_1$ (potentially substituting $v'$ for $v_1$) and so, if $|S_1|>|K_1|$ then $K_1$ was not optimal for $G_1$. If $|S_1|\le|K_1|$, it must be $|S_2|>|K_2|$ contradicting the optimality of $K_2$ for $G_2$. Similarly, if $u\in V_2$ we have either $|S_1|>|K_1|$ or $|S_2|>|K_2|$.
 
 If $u$ is the merged vertex $v'$ then there must be at least two other vertices added from $V_1,V_2$ for $|S|>|K_1|+|K_2|$, since $S$ can only contain $v'$ in the place of $v_1\in K_1$ and $v_2\in K_2$. In this case the same argument as above gives the contradiction.
\end{proof}

\begin{lemma}\label{d_ball}
 For any graph $G=(V,E)$ of maximum degree at most $\Delta$ and radius at most $d-1$ centered on some vertex $v$, it is $OPT_d(G)\le O(\Delta^{\lfloor d/2\rfloor})$.
\end{lemma}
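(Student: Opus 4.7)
The plan is to iteratively apply the merge operation of Definition~\ref{merge_operation} to decompose $G$ into one small-diameter ``core'' plus several small-radius ``lobes'', and then invoke Lemmas~\ref{delta_ball} and~\ref{ball_join}. Fix a vertex $v$ realizing the radius bound, a BFS tree $T$ rooted at $v$, and a threshold $\tau=\max\{\lfloor d/2\rfloor-1,\,0\}$. Let $L_\tau$ denote the set of vertices at distance exactly $\tau$ from $v$. Take $C$ to be the subgraph of $G$ induced by the vertices at distance $\le\tau$ from $v$, and for each $u\in L_\tau$ let $G_u$ be the subgraph of $G$ induced by $u$ together with all $T$-descendants of $u$. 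Since adjacent vertices in $G$ sit at BFS depths differing by at most one, every edge of $G$ is then internal to $C$, internal to some $G_u$, or a cross-edge between two different lobes.

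First I would verify that $G$ can be obtained by iteratively merging $C$ with the lobes $G_{u_1},G_{u_2},\ldots$ in some order: each merge identifies the copy of $u_i$ inherited from $C$ with the copy in $G_{u_i}$, while every cross-edge between $G_{u_i}$ and a lobe already folded into the running graph is introduced via step~(3) of the merge at that same step (which is legal since at that point both endpoints lie in the two graphs being merged). Iterating Lemma~\ref{ball_join} across this sequence of merges yields
\[
OPT_d(G) \;\le\; OPT_d(C) + \sum_{u\in L_\tau} OPT_d(G_u).
\]
Next I would bound each term. The core $C$ has radius $\le\tau<d/2$ centered at $v$, so its diameter is at most $2\tau\le d-2<d$ and hence $OPT_d(C)\le 1$. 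Each lobe $G_u$ is centered at $u$ with radius at most $d-1-\tau=\lceil d/2\rceil$, so Lemma~\ref{delta_ball} gives $OPT_d(G_u)\le\Delta$. By the Moore bound $|L_\tau|\le\Delta(\Delta-1)^{\tau-1}=O(\Delta^{\lfloor d/2\rfloor-1})$, and combining,
\[
OPT_d(G) \;\le\; 1 + \Delta\cdot O\!\left(\Delta^{\lfloor d/2\rfloor-1}\right) \;=\; O\!\left(\Delta^{\lfloor d/2\rfloor}\right).
\]

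The main technical subtlety I expect is the scheduling step: checking that every cross-edge of $G$ between two lobes (or between $C\setminus\{u\}$ and $G_u$) can be introduced as a step-(3) addition in some valid sequence of two-graph merges, and that Lemma~\ref{ball_join} (stated pairwise) iterates without loss. Once this is settled, the bound is immediate from the Moore bound and the two lemmas. The corner case $\tau=0$ (i.e.\ $d\in\{2,3\}$) degenerates into a single merge of $C=\{v\}$ with $G$ itself and simply reproduces Lemma~\ref{delta_ball} applied directly to $G$, still yielding $O(\Delta)=O(\Delta^{\lfloor d/2\rfloor})$ as needed.
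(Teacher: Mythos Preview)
Your proposal is correct and follows essentially the same approach as the paper: decompose $G$ into a core of radius $\lfloor d/2\rfloor-1$ around $v$ together with at most $O(\Delta^{\lfloor d/2\rfloor-1})$ lobes of radius $\lceil d/2\rceil$, then iterate Lemma~\ref{ball_join} and apply Lemma~\ref{delta_ball} to each lobe. Your version is in fact more explicit than the paper's (using a BFS tree to pin down the lobes and spelling out how cross-edges are scheduled into step~(3) of successive merges), but the structure of the argument is the same.
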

\begin{proof}
 Any graph $G$ of maximum degree at most $\Delta$ and radius at most $d-1$ centered on a vertex $v$ can be obtained by the following process: we begin with a graph $H$ of radius at most $\lfloor d/2\rfloor-1$ and maximum degree $\Delta$. Let $\{v_1,\dots,v_k\}\in H$ be the set of vertices at maximum distance from $v$, i.e.\ $d_H(v,v_i)=\lfloor d/2\rfloor-1$. Since the degree of $H$ is bounded by $\Delta$, it must be $k\le \Delta^{\lfloor d/2\rfloor-1}$. We now let $H_i$, for each $i\le k$, denote a series of at most $k$ graphs of radius at most $\lceil d/2\rceil$ centered on a vertex $v_i$ and maximum degree $\Delta$.
 
 Repeatedly applying the merge operation $M_H^{H_i}(v_1,v_i,{\bf U},{\bf W})$ between graph $H$ (or the result of the previous operation) and such a graph $H_i$ we can obtain any graph $G$ of radius at most $d-1$: identifying a vertex $v_j\in H$ (for $j\in[1,k]$) at maximum distance from $v$ with the central vertex $v_i$ of $H_i$ and then adding any number of edges between the vertices of $H$ and $H_i$ (while respecting the maximum degree of $\Delta$), we can produce any graph of radius $\le d-1$, since the distance from $v$ to each $v_j$ is at most $\lfloor d/2\rfloor-1$ and from there to any vertex of $H_i$ it is at most $\lceil d/2\rceil$. The remaining structure of $G$ can be constructed by the chosen structures of the graphs $H,H_i$ and the added edges between them, i.e.\ the sequences ${\bf U,W}$.
  
 By Lemma~\ref{delta_ball} it is $OPT_d(H_i)\le\Delta$ and by Lemma~\ref{ball_join}, it must be $OPT_d(G)\le OPT_d(H)+\sum_{i=1}^{k}OPT_d(H_i)\le1+\Delta\cdot\Delta^{\lfloor d/2\rfloor-1}\le1+\Delta^{\lfloor d/2\rfloor}$.
\end{proof}

\begin{theorem}\label{delta_ratio}
 Any degree-based greedy approximation algorithm for \dS\ achieves a ratio of $O(\Delta^{\lfloor d/2\rfloor})$ on graphs of degree bounded by $\Delta$.
\end{theorem}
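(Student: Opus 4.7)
The plan is a standard charging argument that uses Lemma~\ref{d_ball} to cap, for each greedy selection, how many vertices of an optimum solution can be ``killed'' by that selection. A degree-based greedy for \dS\ operates in the $(d-1)$-th power graph $G^{d-1}$: at each iteration it chooses some vertex $v$ according to the specific degree criterion, adds $v$ to the produced solution $S$, and removes every vertex adjacent to or equal to $v$ in $G^{d-1}$, i.e.\ every vertex at distance at most $d-1$ from $v$ in $G$; this repeats until $V$ is exhausted.

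First I would fix an arbitrary optimum $d$-scattered set $K^{*}$ in $G$ and let $v_1,\dots,v_s$ be the sequence of greedy picks, so that $|S|=s$. For each $i$, let $B_i \coloneqq \{u \in V : d_G(v_i,u) \le d-1\}$ denote the closed $(d-1)$-ball around $v_i$ in $G$. By the removal rule, every vertex of $V$ belongs to at least one $B_i$ (being either picked by greedy or eliminated because it was within distance $d-1$ of a pick), so $K^{*} \subseteq \bigcup_{i=1}^{s} (K^{*} \cap B_i)$ and hence $|K^{*}| \le \sum_{i=1}^{s} |K^{*} \cap B_i|$.

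Next I would bound each term $|K^{*} \cap B_i|$ via Lemma~\ref{d_ball}. Consider the induced subgraph $H_i \coloneqq G[B_i]$: as a subgraph of $G$ it has maximum degree at most $\Delta$, and its radius is at most $d-1$ when centered at $v_i$, since for any $u \in B_i$ a shortest $v_i$-to-$u$ path in $G$ only traverses vertices at distance at most $d-1$ from $v_i$, so the entire path already lies inside $H_i$ and realises $d_G(v_i,u)$ within $H_i$. Because pairwise distances can only grow when passing to an induced subgraph, $K^{*} \cap B_i$ is still a $d$-scattered set in $H_i$, and Lemma~\ref{d_ball} yields $|K^{*} \cap B_i| \le OPT_d(H_i) \le O(\Delta^{\lfloor d/2\rfloor})$.

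Combining the two bounds, $|K^{*}| \le s \cdot O(\Delta^{\lfloor d/2\rfloor}) = |S| \cdot O(\Delta^{\lfloor d/2\rfloor})$, which is the claimed approximation ratio. The main obstacle is the short verification that shortest $v_i$-to-$u$ paths in $G$ survive inside $B_i$, which is what lets us invoke Lemma~\ref{d_ball} on $H_i$; beyond this the argument is purely combinatorial. Note in particular that the analysis never exploits the specific degree-based criterion used to select $v_i$ at each iteration, so the same $O(\Delta^{\lfloor d/2\rfloor})$ ratio holds uniformly for \emph{any} greedy scheme that picks a vertex and excludes its $(d-1)$-ball.
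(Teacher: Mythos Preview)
Your proof is correct and shares the paper's core idea: for each greedy pick $v_i$, bound the number of optimum-solution vertices lying in its $(d-1)$-ball via Lemma~\ref{d_ball}, then sum over picks. The paper packages the aggregation step differently: it takes the sets $V_i$ of vertices \emph{removed} at step~$i$ as a partition of $V$ and invokes the merge machinery (Lemma~\ref{ball_join}) to conclude $OPT_d(G)\le\sum_i OPT_d(G[V_i])$. Your route is more elementary: by working with the full (possibly overlapping) balls $B_i$ and using the one-line fact that distances can only increase in an induced subgraph, you bypass Lemma~\ref{ball_join} entirely and also avoid the delicate issue of whether the partition piece $G[V_i]$ still has radius at most $d-1$ centred at $v_i$ (shortest paths from $v_i$ in $G$ may run through earlier pieces). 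Both arguments rest on Lemma~\ref{d_ball} and neither exploits the specific degree criterion, so your closing remark about generality matches the paper.
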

\begin{proof}
 Let $G=(V,E)$ be the input graph and consider the process of our supposed greedy algorithm: it picks a vertex $v_i$, removes it from consideration along with the set $V_i\subseteq V$ of vertices at distance at most $d-1$ from $v_i$ and continues the process until there are no vertices left to consider. The sets $V_1,\dots,V_{ALG}$ thus form a partition of $G$. By Lemma~\ref{d_ball}, the optimum size of a $d$-scattered set in any such $V_i$ is at most $O(\Delta^{\lfloor d/2\rfloor})$ and thus $OPT_d(G)\le ALG\cdot O(\Delta^{\lfloor d/2\rfloor})$, by Lemma~\ref{ball_join}, since $G$ can be seen as the merged graph of $G[V_1],\dots,G[V_{ALG}]$.
\end{proof}

\subsection{Bipartite graphs}\label{dSS_bipartite-sec}
Here we consider bipartite graphs and show that \dS\ is approximable to $2\sqrt{n}$ in polynomial time also for even values of $d$. Our algorithm will be applied on both sides of the bipartition and each time it will only consider vertices from one side as candidates for inclusion in the solution. Appropriate sub-instances of \textsc{Set Packing} are then defined and solved using the known $\sqrt{n}$-approximation for that problem.

\begin{definition}
 For a bipartite graph $G=(A\cup B,E)$, let $1OPT_d(G)$ denote the size of the largest \emph{one-sided} $d$-scattered set of $G$, i.e.\ a set that only includes vertices from the same side of the bipartition $A$ or $B$, but not both.
\end{definition}

\begin{lemma}\label{one_sided_opt}
 For a bipartite graph $G=(A\cup B,E)$, it is $1OPT_d(G)\ge OPT_d(G)/2$.
\end{lemma}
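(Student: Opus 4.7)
\medskip
\noindent\textbf{Proof plan for Lemma~\ref{one_sided_opt}.} The plan is a simple pigeonhole argument on an optimum solution. Let $K \subseteq A \cup B$ be an optimum $d$-scattered set of $G$, so $|K| = OPT_d(G)$. I would partition $K$ according to the bipartition by setting $K_A := K \cap A$ and $K_B := K \cap B$, which gives $|K_A| + |K_B| = |K|$.

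The key observation is that any subset of a $d$-scattered set is itself a $d$-scattered set, since the pairwise distance condition is hereditary. Hence both $K_A$ and $K_B$ are $d$-scattered sets in $G$, and by construction each lies entirely on one side of the bipartition, so each is a one-sided $d$-scattered set as in the definition. By pigeonhole, at least one of $|K_A|, |K_B|$ is at least $|K|/2 = OPT_d(G)/2$, and taking this larger set as a witness yields $1OPT_d(G) \geq OPT_d(G)/2$.

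There is no real obstacle here; the statement follows from a one-line averaging once we note that the restriction of a $d$-scattered set to a subset of vertices is still $d$-scattered. The lemma will be used afterwards to justify running a one-sided approximation routine (e.g.\ via a \textsc{Set Packing} reduction) on each side of the bipartition independently and losing only a factor of $2$ in the final ratio.
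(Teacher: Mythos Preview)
Your proposal is correct and matches the paper's own proof essentially line for line: take an optimum $d$-scattered set, split it across the two sides of the bipartition, and observe that the larger half is a one-sided $d$-scattered set of size at least $OPT_d(G)/2$. The only minor addition you make explicit is that subsets of $d$-scattered sets remain $d$-scattered, which the paper leaves implicit.
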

\begin{proof}
 Consider an optimal solution $S\subseteq A\cup B$ with $|S|=OPT_d(G)$. Then at least half of the vertices of $S$ are contained in one side of $G$, i.e.\ it is either $|S\cap A|\ge|S|/2$ or $|S\cap B|\ge|S|/2$ (or both if $|S\cap A|=|S\cap B|=|S|/2$). By definition, it is also $1OPT_d(G)\ge|S\cap A|$ and $1OPT_d(G)\ge|S\cap B|$. Thus in both cases it must be $1OPT_d(G)\ge OPT_d(G)/2$.
\end{proof}

\begin{theorem}\label{bipartite_even_thm}
 For any bipartite graph $G=(A\cup B,E)$ of size $n$ and $d$ even, the \dS\ problem can be approximated within a factor of $2\sqrt{n}$ in polynomial time.
\end{theorem}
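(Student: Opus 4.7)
The plan is to reduce, for each side of the bipartition separately, the one-sided variant of \dS\ to an instance of \textsc{Set Packing} on universe $V(G)$, apply the standard polynomial-time $\sqrt{n}$-approximation for \textsc{Set Packing}, return the larger of the two resulting packings, and invoke Lemma~\ref{one_sided_opt} to pay an extra factor of $2$ for restricting to a single side of the bipartition.

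The whole argument hinges on a structural observation that requires $d$ to be even. Write $d=2k$. Since $G$ is bipartite, the distance between any two vertices lying in the same part is even, so for $u,v\in A$ we have $d(u,v)<d$ iff $d(u,v)\le 2k-2$, iff the midpoint $z$ of some shortest $u$--$v$ path satisfies $d(u,z),d(v,z)\le k-1$, iff $N^{k-1}[u]\cap N^{k-1}[v]\neq\emptyset$, where $N^{k-1}[\cdot]$ denotes the closed $(k-1)$-neighborhood. Consequently, $K\subseteq A$ is a $d$-scattered set of $G$ precisely when the family $\{N^{k-1}[u]:u\in K\}$ is pairwise disjoint. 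Maximizing $|K|$ over $K\subseteq A$ is thus exactly \textsc{Set Packing} on the universe $V(G)$ of size $n$ with set family $\{N^{k-1}[u]:u\in A\}$, and the analogous instance is built for $B$.

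Applying the known polynomial-time $\sqrt{n}$-approximation for \textsc{Set Packing} on both instances and outputting the larger packing produces a one-sided $d$-scattered set of size at least $1OPT_d(G)/\sqrt{n}\ge OPT_d(G)/(2\sqrt{n})$ by Lemma~\ref{one_sided_opt}, giving the claimed $2\sqrt{n}$ ratio. The main obstacle is the even-$d$ observation itself: it breaks down for odd $d$, because a shortest path between same-side vertices cannot have odd length and so no ``midpoint'' witness exists at equal integer distance from both endpoints, which matches the parity discussion in the introduction and explains why the odd case needed the separate $\sqrt{n}$-approximation of \cite{HalldorssonKT00}. The remaining tasks---computing the $(k-1)$-neighborhoods, assembling the set-packing instances, and checking polynomial running time---are routine.
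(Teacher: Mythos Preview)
Your proposal is correct and follows essentially the same approach as the paper: reduce the one-sided problem to \textsc{Set Packing} via $(d/2-1)$-neighborhoods, invoke the $\sqrt{n}$-approximation of \cite{HalldorssonKT00}, and apply Lemma~\ref{one_sided_opt}. Your argument is in fact slightly cleaner than the paper's, which splits into two cases according to the parity of $d/2$ (using only $B$-elements when $d/2$ is even, and $A\cup B$-elements when $d/2$ is odd); by always taking the closed $(d/2-1)$-neighborhood over the full vertex set $V(G)$ you subsume both cases at once, since the midpoint of an even-length shortest path always lies within distance $d/2-1$ of both endpoints regardless of which side it falls on.
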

\begin{proof}
 We will consider two cases based on the parity of $d/2$ and define appropriate \textsc{Set Packing} instances whose solutions are in a one-to-one correspondence with one-sided $d$-scattered sets in $G$. We will then be able to apply the $\sqrt{n}$-approximation for \textsc{Set Packing} of \cite{HalldorssonKT00}. We will repeat this process for both sides $A,B$ of the bipartition and retain the best solution found. Thus we will be able to approximate $1OPT_d(G)$ within a factor of $\sqrt{n}$ and then rely on Lemma~\ref{one_sided_opt} to obtain the claimed bound.
 
 Our \textsc{Set Packing} instances are defined as follows: for $d/2$ even, we make a set $c_i$ for every vertex $a_i$ of $A$ (i.e.\ from one side) and an element $e_j$ for every vertex $b_j$ of $B$ (i.e.\ from the other side). For $d/2$ odd, we make a set $c_i$ for every vertex $a_i$ of $A$ (again from one side) and an element $e_j$ for every vertex $b_j$ of $B$ \emph{and} an element $r_i$ for every vertex $a_i\in A$ (i.e.\ from both sides). Note that $i,j\le n$. In both cases we include an element corresponding to vertex $x\in G$ in a set corresponding to a vertex $y\in G$, if $d_G(x,y)\le d/2-1$. We then claim that for any given collection $C$ of compatible (i.e.\ non-overlapping) sets in the above definitions, we can always find a one-sided $d$-scattered set $S\subseteq A$ in $G$ with $|C|=|S|$ and vice-versa.
 
 First consider the case where $d/2$ is even. Given a one-sided $d$-scattered set $S\subseteq A$, we let $C$ include all the sets that correspond to some vertex in $S$ and suppose for a contradiction that there exists a pair of sets $c_1,c_2\in C$ that are incompatible, i.e.\ that there exists some element $e$ with $e\in c_1$ and $e\in c_2$. Let $a_1,a_2\in A$ be the vertices corresponding to sets $c_1,c_2$ and $b\in B$ be the vertex corresponding to element $e$. Then it must be $d_G(a_1,b)\le d/2-1$ since $e\in c_1$ and $d_G(b,a_2)\le d/2-1$ since $e\in c_2$, that gives $d_G(v_1,v_2)\le d-2$, which contradicts $S$ being a $d$-scattered set. On the other hand, given collection $C$ of compatible sets we let $S\subseteq A$ include all the vertices corresponding to some set in $C$ and suppose there exists a pair of vertices $a_1,a_2\in S$ for which it is $d_G(a_1,a_2)<d$. Since $d$ is even and $a_1,a_2\in A$, if $d_G(a_1,a_2)<d$ it must be $d_G(a_1,a_2)\le d-2$, as any shortest path between two vertices on the same side of a bipartite graph must be of even length. Thus there must exist at least one vertex $b\in B$ on a shortest path between $a_1,a_2$ in $G$ for which it is $d_G(a_1,b)\le d/2-1$ and $d_G(b,a_2)\le d/2-1$. This means that the element $e$ corresponding to vertex $b\in B$ must be included in both sets $c_1,c_2$ corresponding to vertices $a_1,a_2\in A$, which contradicts the compatibility of sets in $C$.
 
 We next consider the case where $d/2$ is odd. Given a one-sided $d$-scattered set $S\subseteq A$, we again let $C$ include all sets that correspond to some vertex in $S$. If there exists a pair of sets $c_1,c_s\in C$ that contain the same element $e$ corresponding to some vertex $b\in B$ or some element $r$ that corresponds to a vertex $a\in A$, then by the same argument as in the even case we know that there must exist paths of length $\le d/2-1$ from both vertices $a_1,a_2\in A$ (corresponding to $c_1,c_2\in C$) to vertex $b\in B$ or $a\in A$ and thus it must be $d_G(a_1,a_2)<d$. On the other hand, given a collection $C$ of compatible sets we again let $S\subseteq A$ include all the vertices corresponding to sets in $C$. Supposing there exists a pair $a_1,a_2\in S$ for which it is $d_G(a_1,a_2)<d$, then again as $d$ is even it must be $d_G(a_1,a_2)\le d-2$. This means there must be a vertex $a\in A$ on a shortest path between $a_1$ and $a_2$ for which $d_G(a_1,a)\le d/2-1$ and $d_G(a,a_2)\le d/2-1$, which means the corresponding sets $c_1,c_2\in C$ must both contain element $r$ that corresponds to this vertex $a\in A$, giving a contradiction.
 
 Our algorithm then is as follows. For a given bipartite graph $G=(A\cup B,E)$, we define an instance of \textsc{Set Packing} as described above (depending on the parity of $d/2$) and apply the $\sqrt{n}$-approximation of \cite{HalldorssonKT00}. Observe that $|A|,|B|\le n$. We then exchange the sets $A,B$ in the definitions of our instances and repeat the same process. This will return a solution $S$ of size $|S|\ge\frac{1OPT_d(G)}{\sqrt{n}}$, which by Lemma~\ref{one_sided_opt} is $\ge\frac{OPT_d(G)}{2\sqrt{n}}$. 
\end{proof}

\section{Super-polynomial time}\label{dSS_super-poly-sec}
This section concerns itself with running times that are not restricted to being functions polynomial in the size of the input. We begin with an upper bound on the size of the solution in any connected graph that is then employed in obtaining a simple exact exponential-time algorithm.


\begin{lemma}\label{dss_size_bound}
 The maximum size of any $d$-Scattered Set in a connected graph is $\left\lfloor\frac{n}{\lfloor d/2\rfloor}\right\rfloor$.
\end{lemma}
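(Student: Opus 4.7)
The approach is a disjoint-packing argument: I associate to every vertex $v$ of a $d$-scattered set $K$ a block $P_v$ of exactly $\lfloor d/2\rfloor$ vertices of $G$, show these blocks are pairwise disjoint, and conclude $|K|\cdot\lfloor d/2\rfloor \le n$, which yields $|K|\le\lfloor n/\lfloor d/2\rfloor\rfloor$.

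The case $|K|\le 1$ is immediate, so I may assume $|K|\ge 2$ and use connectivity (plus $d$-scatteredness) to produce witnessing shortest paths. For each $v\in K$, pick some $v'\in K\setminus\{v\}$; since $d_G(v,v')\ge d\ge\lfloor d/2\rfloor$, any shortest $v$-to-$v'$ path in $G$ has at least $\lfloor d/2\rfloor$ edges. I take $P_v$ to be the set of the first $\lfloor d/2\rfloor$ vertices along such a path, starting with $v$ itself. By construction $|P_v|=\lfloor d/2\rfloor$ and every element of $P_v$ lies at distance at most $\lfloor d/2\rfloor-1$ from $v$ in $G$.

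The heart of the argument is showing that the blocks $P_v$ for $v\in K$ are pairwise disjoint. Suppose $w\in P_v\cap P_u$ for distinct $u,v\in K$. Then the triangle inequality gives
\[
d_G(u,v)\;\le\;d_G(u,w)+d_G(w,v)\;\le\;2(\lfloor d/2\rfloor-1)\;\le\;d-2\;<\;d,
\]
contradicting $u,v\in K$. The bound $2(\lfloor d/2\rfloor-1)\le d-2$ holds for both parities of $d$, so this step uniformly handles the even and odd cases that are usually separated elsewhere in the paper.

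With disjointness established, $\sum_{v\in K}|P_v|=|K|\cdot\lfloor d/2\rfloor\le n$, giving $|K|\le\lfloor n/\lfloor d/2\rfloor\rfloor$. The only subtlety is ensuring the shortest paths used to define each $P_v$ are long enough to supply $\lfloor d/2\rfloor$ vertices; this is precisely where the assumptions $|K|\ge 2$ and connectedness of $G$ are needed. I expect no further obstacle, as the triangle-inequality step is robust and parity-insensitive.
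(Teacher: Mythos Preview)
Your proof is correct and follows essentially the same disjoint-packing argument as the paper: to each solution vertex associate a set of $\lfloor d/2\rfloor$ vertices within distance $\lfloor d/2\rfloor-1$ of it, and use the triangle inequality to show these sets are pairwise disjoint. The only cosmetic difference is that you take $P_v$ to be the first $\lfloor d/2\rfloor$ vertices along a shortest path to another solution vertex, whereas the paper defines $M(u)=\{v:d(u,v)<\lfloor d/2\rfloor\}$ and separately argues this ball has size at least $\lfloor d/2\rfloor$ via an outgoing path; your formulation is slightly more direct.
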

\begin{proof}
 Given connected graph $G=(V,E)$, let $S\subseteq V$ be a $d$-scattered set in $G$. To each $u\in S$, we will assign all vertices at distance $<\lfloor d/2\rfloor$: let $M(u)\coloneqq\{u\}\cup\{v\in V|d(u,v)<\lfloor d/2\rfloor\}$. Our aim is to show that for any $u\in S$, it must be $|M(u)|\ge\lfloor d/2\rfloor$. In other words, for any vertex $u$ in the solution there must be at least $\lfloor d/2\rfloor-1$ distinct vertices that are at distance $<\lfloor d/2\rfloor$ from $u$ and at distance $\ge\lfloor d/2\rfloor$ from any vertex $w\in S$. Observe that if for some pair $u,w\in S$, we have $M(u)\cap M(w)\not=\emptyset$, then $d(u,w)<d$, as there exists a vertex at distance $<\lfloor d/2\rfloor$ from both $u,w$.
 
 Consider some $u\in S$ and let $k$ be the number of vertex-disjoint paths of length $<\lfloor d/2\rfloor$ starting from $u$, i.e.\ such that no vertices are shared between them. Then it must be $|M(u)\setminus\{u\}|\ge k(\lfloor d/2\rfloor-1)$. If $V\subseteq M(u)$, then $OPT_d(G)=|S|=1$ and the claim trivially holds, so we may assume that there is at least one vertex $z\notin M(u)$. This means $k\ge1$, since $G$ is connected and there must be (at least) one path from $z$ to $u$ of length $\ge\lfloor d/2\rfloor$. It is then $|M(u)\setminus\{u\}|+1\ge \lfloor d/2\rfloor -1+1$ giving $|M(u)|\ge\lfloor d/2\rfloor$.
 
 This means that for each vertex $u$ taken in any solution $S$ there must be at least $\lfloor d/2\rfloor$ distinct vertices in the graph, i.e.\ $G$ must contain $|S|$ disjoint subsets of size at least $\lfloor d/2\rfloor$ and the claim follows.
\end{proof}

\begin{theorem}\label{dss_exact_thm}
 The \textsc{$d$-Scattered Set} problem can be solved in $O^*((ed)^{\frac{2n}{d}})$ time.
\end{theorem}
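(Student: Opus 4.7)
The plan is a straightforward brute-force enumeration exploiting the size cap established in Lemma~\ref{dss_size_bound}. That lemma bounds every $d$-scattered set by $M := \lfloor n/\lfloor d/2\rfloor\rfloor$ vertices (summing the per-component bound back to an overall one), so the algorithm enumerates every subset $S\subseteq V$ with $|S|\le M$, verifies each candidate by running BFS from every $v\in S$ to check that all pairwise distances are $\ge d$, and outputs the largest valid $S$ found. Correctness is then immediate from the lemma, and the per-subset check is clearly polynomial.

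For the running-time bound, the number of enumerated subsets is $\sum_{k=0}^{M}\binom{n}{k}\le (M+1)\binom{n}{M}$, which holds whenever $M\le n/2$ (for $d\le 3$ the claimed bound exceeds $2^n$ and is trivially met by enumerating all subsets). Using the standard estimate $\binom{n}{M}\le (en/M)^M$ with $M\approx n/\lfloor d/2\rfloor$ yields a main term $(e\lfloor d/2\rfloor)^{n/\lfloor d/2\rfloor}$. For even $d$ this is exactly $(ed/2)^{2n/d}\le (ed)^{2n/d}$, so combined with the polynomial per-subset cost and the $O(n)$ summation factor one lands at the stated $O^*((ed)^{2n/d})$.

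The one step that requires a little care is the odd-$d$ case, where $\lfloor d/2\rfloor=(d-1)/2$ and the natural exponent/base become $2n/(d-1)$ and $e(d-1)/2$, giving a bound of $(e(d-1)/2)^{2n/(d-1)}$ rather than $(ed/2)^{2n/d}$. I would verify that this quantity is still dominated by $(ed)^{2n/d}$, equivalently that $\frac{d}{d-1}\ln\bigl(\tfrac{e(d-1)}{2}\bigr)\le\ln(ed)$ for all $d\ge 3$. This follows from a short monotonicity check together with the $d\to\infty$ behaviour (both sides tend to $\ln(ed)$ and the left side approaches from below by a $\ln 2$ margin), and is the main place where one has to be slightly careful rather than just chain known inequalities; the floors in $M$ are absorbed harmlessly into $O^*$. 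With this verification the $(ed)^{2n/d}$ target is achieved uniformly in the parity of $d$, completing the argument.
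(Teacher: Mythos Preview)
Your approach is essentially identical to the paper's: enumerate all subsets of size at most $M=\lfloor n/\lfloor d/2\rfloor\rfloor$, test each for feasibility, and bound the count via $\binom{n}{M}\le(en/M)^M$. If anything, you are more careful than the paper, which silently writes the bound as $\frac{2n}{d}\binom{n}{d/2}$ (a typo for $\binom{n}{2n/d}$) and does not address the odd-$d$ floor at all; your explicit check that $(e(d-1)/2)^{2n/(d-1)}\le(ed)^{2n/d}$ for odd $d\ge3$ fills that gap.
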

\begin{proof}
 We simply try all sets of vertices of size at most
 $\left\lfloor\frac{n}{\lfloor d/2\rfloor}\right\rfloor$ for feasibility and retain the best one found. By
 Lemma~\ref{dss_size_bound}, the optimal solution must be contained therein. The number of sets we examine is
 $\le\frac{2n}{d}{n \choose d/2}=O^*((ed)^{\frac{2n}{d}})$, that gives the running time.
\end{proof}

\subsection{Inapproximability}\label{dSS_super-poly-inapprox-sec}
We now turn our attention to the problem's hardness of approximation in super-polynomial time. We use Theorem~\ref{gap_reduction} in conjunction with straightforward reductions from \textsc{Independent Set} to \dS\ for the two cases, that depend on the parity of $d$.

\begin{theorem}\label{inapprox_even}
Under the randomized ETH, for any even $d\ge4$, $\epsilon>0$ and $\rho\le
(2n/d)^{5/6}$, no $\rho$-approximation for \textsc{$d$-Scattered Set} can take
time $2^{\left(\dfrac{n^{1-\epsilon}}{\rho^{1+\epsilon}d^{1-\epsilon}}\right)}\cdot n^{O(1)}$.
\end{theorem}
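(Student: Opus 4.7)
The plan is a gap-preserving reduction with two stages: first apply Theorem~\ref{gap_reduction} to obtain a gap instance of \textsc{Independent Set}, and then transform it into a \dS\ instance by the standard edge-subdivision technique available for even $d$. Concretely, each edge of the \textsc{Independent Set} instance $G=(V,E)$ is replaced by a path of length $d/2$ by inserting $d/2-1$ fresh vertices, which is well-defined as $d$ is even.

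Starting from a 3-SAT formula $\phi$ on $N$ variables, I would invoke Theorem~\ref{gap_reduction} with parameter $r := \rho^{1/(1-2\epsilon')}$ for a suitably small $\epsilon'$ depending on $\epsilon$; the hypothesis $\rho \leq (2n/d)^{5/6}$ translates into $r \leq N^{5+O(\epsilon')}$, making the theorem applicable. The resulting graph $G$ has $|V| = N^{1+\epsilon'} r^{1+\epsilon'}$ vertices and maximum degree $r$, with a gap of $r^{1-2\epsilon'} \geq \rho$ between the Yes case ($\alpha(G) \geq N^{1+\epsilon'} r$) and the No case ($\alpha(G) \leq N^{1+\epsilon'} r^{2\epsilon'}$). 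Build $G'$ as described.

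The combinatorial core is the claim $OPT_d(G') = \alpha(G)$. Forward: any independent set $I \subseteq V(G)$ yields an equally large $d$-scattered set in $G'$, since adjacent vertices of $G$ are at distance exactly $d/2 < d$ via their subdivided edge while non-adjacent ones end up at distance $\geq 2\cdot(d/2) = d$. Reverse: each subdivided path has diameter $d/2 < d$, so any $d$-scattered set of $G'$ contains at most one vertex per path, and replacing each subdivision-vertex in the solution by one of its two path endpoints yields an independent set of $G$ of equal size (the $d$-scattering condition in $G'$ prevents conflicts among retained original endpoints).

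For the final accounting, the size of $G'$ satisfies $n = |V| + |E|(d/2-1) = \Theta(N^{1+\epsilon'} r^{1+\epsilon'} d)$, exploiting that the graph produced by Theorem~\ref{gap_reduction} has only a linear number of edges in $|V|$. Inverting and substituting $r = \Theta(\rho)$ gives $N = \Omega(n^{1-\epsilon}/(\rho^{1+\epsilon} d^{1-\epsilon}))$ for a suitable $\epsilon' \ll \epsilon$. A $\rho$-approximation algorithm of running time $2^{n^{1-\epsilon}/(\rho^{1+\epsilon} d^{1-\epsilon})} \cdot n^{O(1)}$, composed with this reduction, would decide 3-SAT on $N$ variables in randomized time $2^{o(N)} \cdot N^{O(1)}$, contradicting the randomized ETH. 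The main bookkeeping obstacle is verifying that the several $(1\pm\epsilon')$-factors --- from $N^{1+\epsilon'}$, $r^{1+\epsilon'}$, and $r = \rho^{1/(1-2\epsilon')}$ --- can all be absorbed into the single slack $\epsilon$ of the final bound for a small enough $\epsilon'$.
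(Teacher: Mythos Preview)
Your reverse inclusion $OPT_d(G') \le \alpha(G)$ fails for the edge-subdivision construction. For $G = K_4$ and $d = 4$, subdividing each edge once gives $d_{G'}(x_{ab}, x_{cd}) = 4$ for the subdivision vertices on the disjoint edges $ab$ and $cd$, so $OPT_4(G') \ge 2 > 1 = \alpha(K_4)$; your replacement step cannot repair this, since mapping $x_{ab}$ to either of $a,b$ and $x_{cd}$ to either of $c,d$ always lands on an adjacent pair in $K_4$. More generally, whenever $d \equiv 0 \pmod 4$ the exact midpoints of the subdivided paths of any matching of $G$ sit at pairwise distance $\ge d/4 + (d/2)\cdot 1 + d/4 = d$ in $G'$, so $OPT_d(G')$ is at least the matching number of $G$. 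For the gap instance this can be of order $|V(G)|$, which dominates both the Yes and No values of $\alpha(G)$ and erases the gap entirely.

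The paper sidesteps this with a different construction: it keeps $G$ intact and attaches a pendant path of $d/2-1$ edges to every vertex. Any $d$-scattered set can be pushed to the pendant leaves without loss (moving outward along a pendant only increases all distances), and two leaves $\ell_u,\ell_v$ satisfy $d_H(\ell_u,\ell_v) = (d-2) + d_G(u,v)$, hence are at distance $\ge d$ iff $u,v$ are non-adjacent; this gives $OPT_d(H) = \alpha(G)$ exactly, with $|V(H)| = |V(G)|\cdot d/2$. Note also that your size estimate $n = \Theta(N^{1+\epsilon'} r^{1+\epsilon'} d)$ relies on $|E(G)| = O(|V(G)|)$, which Theorem~\ref{gap_reduction} does not assert (it only bounds the maximum degree by $r$); with $|E(G)|$ as large as $|V(G)|\,r/2$, your construction would pick up an extra factor $r \approx \rho$ in $n$ and recover only the weaker odd-$d$ exponent even if the combinatorial lemma were patched.
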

\begin{proof}
 We suppose the existence of a $\rho$-approximation algorithm for \dS\ of running time $2^{\left(\frac{n^{1-\epsilon}}{d^{1-\epsilon}\rho^{1+\epsilon}}\right)}\cdot n^{O(1)}$ for some $\epsilon>0$ and aim to show this would violate the (randomized) ETH. First let $\epsilon_1>0$ be such that $\epsilon>\epsilon_1$ and $\epsilon>\frac{2\epsilon_1}{1-3\epsilon_1}$. We next define $\epsilon_2=\frac{1}{1-\epsilon_1}-1$ and $r=\rho^{\frac{1-\epsilon_1}{1-3\epsilon_1}}$. Then, given a formula $\phi$ of \textsc{3SAT} on $N$ variables, we use the reduction of Theorem~\ref{gap_reduction} with parameters $r$ and $\epsilon_2$ to build a graph $G$ from $\phi$, with $|V(G)|=N^{1+\epsilon_2}r^{1+\epsilon_2}$ and maximum degree $r$, such that with high probability: if $\phi$ is satisfiable then $\alpha(G)\ge N^{1+\epsilon_2}r$; if $\phi$ is not satisfiable then $\alpha(G)\le N^{1+\epsilon_2}r^{2\epsilon_2}$. Thus an approximation algorithm with ratio $r^{1-2\epsilon_2}$ would permit us to decide if $\phi$ is satisfiable. We have that $r^{1-2\epsilon_2}=(\rho^{\frac{1-\epsilon_1}{1-3\epsilon_1}})^{1-2\epsilon_2}=(\rho^{\frac{1-\epsilon_1}{1-3\epsilon_1}})^{3-\frac{2}{1-\epsilon_1}}=\rho^{\frac{3+\frac{2\epsilon_1-2}{1-\epsilon_1}-3\epsilon_1}{1-3\epsilon_1}}=\rho$.

 We will construct graph $H$
 from $G$ as follows (similarly to Theorem~3.10 in \cite{HalldorssonKT00}, see also Figure \ref{fig:ds_reductions}):
 graph $H$ contains a copy of $G$ and a distinct path of $d/2-1$ edges
 attached to each vertex of $G$. Without loss of generality, we may assume that any $d$-scattered set will prefer selecting an endpoint of these attached paths than some vertex from $G$, as these selections would exclude strictly fewer vertices from the solution, i.e.\ any solution can only be improved by exchanging any vertex of $G$ with selecting the (other) endpoint of the path attached to it. A pair of vertices that are endpoints of such paths (and not originally in $G$) will be at distance $\ge2(d/2-1)+2=d$, only if the vertices of $G$ to which they are attached are non-adjacent, i.e.\ if the shortest path between them is of length at least 2.
 Thus, $d$-scattered sets  in $H$ are in one-to-one correspondence with independent sets in $G$ and
 $\alpha(G)=OPT_d(H)$. The size of $H$ is  $n=|V(H)|=|V(G)|(d/2-1+1)=N^{1+\epsilon_2}r^{1+\epsilon_2}(d/2)$.
 Note also that $\rho\le N^{5}$ and thus $\rho\le(2n/d)^{5/6}$.
 
 If $\phi$ is satisfiable then $OPT_d(H)=\alpha(G)\ge N^{1+\epsilon_2}r$, while if $\phi$ is not satisfiable then $OPT_d(H)=\alpha(G)\le N^{1+\epsilon_2}r^{2\epsilon_2}$. Therefore, applying the supposed $\rho$-approximation for \dS\ on $H$ would permit us to solve \textsc{3SAT} in time $2^{\left(\frac{n^{1-\epsilon}}{d^{1-\epsilon}\rho^{1+\epsilon}}\right)}\cdot n^{O(1)}$, with high probability. We next show that this would violate the ETH, i.e.\ $2^{\left(\frac{n^{1-\epsilon}}{d^{1-\epsilon}\rho^{1+\epsilon}}\right)}\cdot n^{O(1)}=2^{o(N)}$.
 We have:
 \begin{equation}
 n=N^{1+\epsilon_2}r^{1+\epsilon_2}(d/2)\Rightarrow
 N=\left(\frac{2n}{d}\right)^{1-\epsilon_1}\cdot\frac{1}{\rho^{\left(\frac{1-\epsilon_1}{1-3\epsilon_1}\right)}}
 \end{equation}
 And we then need to show:
 \begin{equation}
  2^{\left(\frac{n^{1-\epsilon}}{d^{1-\epsilon}\rho^{1+\epsilon}}\right)}\cdot n^{O(1)}=2^{o(N)}=2^{o\left(\left(\frac{2n}{d}\right)^{1-\epsilon_1}\cdot\frac{1}{\rho^{\left(\frac{1-\epsilon_1}{1-3\epsilon_1}\right)}}\right)}
 \end{equation}
 Observe that, since $\epsilon>\epsilon_1$ and $\epsilon>\frac{2\epsilon_1}{1-3\epsilon_1}$, it is:
 \begin{align}
 (n/d)^{1-\epsilon}<(2n/d)^{1-\epsilon_1}\\
 (1/\rho^{1+\epsilon})<\left(1/\rho^{(\frac{1-\epsilon_1}{1-3\epsilon_1})}\right)
 \end{align}
 Which then gives:
 \begin{equation}
  \lim_{(n,\rho)\rightarrow\infty}\frac{2^{\left((\frac{n}{d})^{1-\epsilon}\cdot\frac{1}{\rho^{1+\epsilon}}\right)}}{2^{\left((\frac{2n}{d})^{1-\epsilon_1}\cdot\frac{1}{\rho^{\left(\frac{1-\epsilon_1}{1-3\epsilon_1}\right)}}\right)}}=0
 \end{equation}
\end{proof}

The following reduction from \textsc{Independent Set} to \dS\ for odd values of $d$ uses a construction that includes a copy of every edge of the original graph (an \emph{edge gadget}, see Figure \ref{fig:ds_reductions}). This necessity is responsible for the difference in running times and is due to the parity idiosyncrasies of the problem as discussed above.

\begin{figure}[htbp]
 \centerline{\includegraphics[width=120mm]{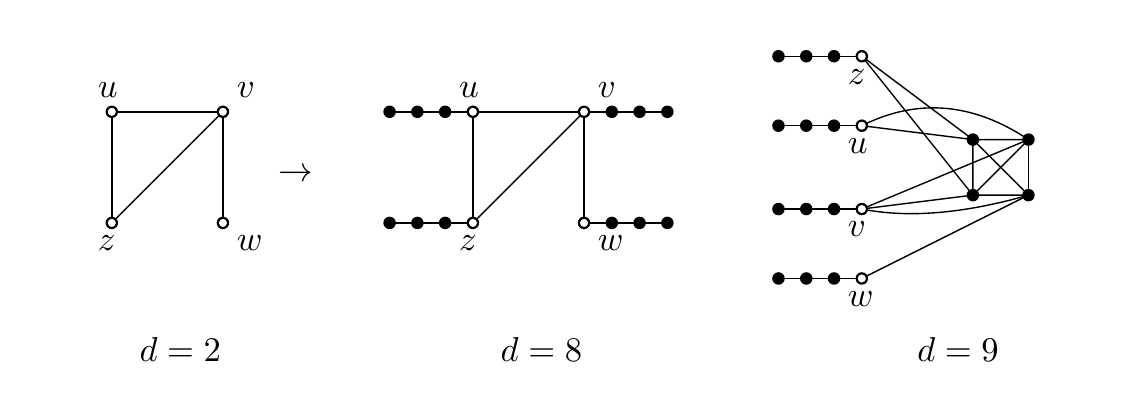}}
 \caption{Examples of the constructions for even (center) and odd (right) values of $d$. Note the existence of an edge ``gadget'' for the odd case.}
 \label{fig:ds_reductions}
 \end{figure}

\begin{theorem}\label{inapprox_odd}
Under the randomized ETH, for any odd $d\ge5$, $\epsilon>0$ and $\rho\le(2n/d)^{5/6}$, no $\rho$-approximation for \textsc{$d$-Scattered Set} can take
time $2^{\left(\dfrac{n^{1-\epsilon}}{\rho^{1+\epsilon}(d+\rho)^{1+\epsilon}}\right)}\cdot n^{O(1)}$.
\end{theorem}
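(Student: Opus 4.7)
The plan is to mirror the proof of Theorem~\ref{inapprox_even} for odd $d$, with the key modification suggested by Figure~\ref{fig:ds_reductions}: in addition to attaching a pendant path to each vertex of the gap-graph, one also inserts an \emph{edge gadget}, i.e.\ subdivides every edge once. This modification is forced by parity: if one simply attaches a path of $\lfloor d/2\rfloor$ edges to each vertex, then the free endpoints of two adjacent vertices land at distance exactly $d$, so both could be selected and the one-to-one correspondence with independent sets would break. Subdividing every edge shifts those critical distances by $1$ and restores the correspondence, at the cost of introducing $|E(G)|$ extra vertices, which is precisely what produces the extra $(d+\rho)^{1+\epsilon}$ factor in the target running time.

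Concretely, I would fix $\epsilon_1\in(0,\epsilon)$ with $\epsilon>2\epsilon_1/(1-3\epsilon_1)$, set $\epsilon_2=1/(1-\epsilon_1)-1$ and $r=\rho^{(1-\epsilon_1)/(1-3\epsilon_1)}$ exactly as in the even case, so that $r^{1-2\epsilon_2}=\rho$. Given a \textsc{3SAT} formula $\phi$ on $N$ variables, invoke Theorem~\ref{gap_reduction} to build $G$ of maximum degree $r$ on $|V(G)|=N^{1+\epsilon_2}r^{1+\epsilon_2}$ vertices, with the usual IS gap. From $G$, construct $H$ by (i) subdividing each edge $(u,v)\in E(G)$ with a single middle vertex $m_{uv}$ and (ii) attaching to every $v\in V(G)$ a pendant path of length $\lfloor d/2\rfloor-1=(d-3)/2$ with free endpoint $p_v$. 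The total size is $n=|V(G)|\lfloor d/2\rfloor+|E(G)|\le N^{1+\epsilon_2}r^{1+\epsilon_2}(d+r)/2$, so the $r$-term from the subdivisions is exactly the source of the $(d+\rho)^{1+\epsilon}$ factor.

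Next I would verify $OPT_d(H)=\alpha(G)$. For the forward direction, a direct computation gives $d_H(p_u,p_v)=2(\lfloor d/2\rfloor-1)+d_H(u,v)$, which equals $2\lfloor d/2\rfloor=d-1$ when $(u,v)\in E(G)$ and is at least $2\lfloor d/2\rfloor+2=d+1$ otherwise, so $\{p_v:v\in I\}$ is a valid $d$-scattered set for every independent $I\subseteq V(G)$. For the reverse direction, an exchange argument — replacing any internal path vertex or any subdivision vertex $m_{uv}$ in a candidate solution by the pendant endpoint $p_u$ — shows that we may assume an optimum $d$-scattered set consists entirely of pendants, and its projection back to $V(G)$ is an independent set. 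The verification that this exchange preserves pairwise distance $\geq d$ reduces, by the above distance formulas, to checking that no $p_w$ with $w$ adjacent to $u$ in $G$ could have coexisted with $m_{uv}$ in the original solution, which is a routine case check.

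The final step is the ETH contradiction and parallels the even case: applying the supposed $\rho$-approximation to $H$ solves \textsc{3SAT} in time $2^{n^{1-\epsilon}/(\rho^{1+\epsilon}(d+\rho)^{1+\epsilon})}n^{O(1)}$. Solving $n\approx N^{1+\epsilon_2}r^{1+\epsilon_2}(d+r)$ for $N$ gives $N=\Theta\bigl((2n/(d+r))^{1-\epsilon_1}/\rho^{(1-\epsilon_1)/(1-3\epsilon_1)}\bigr)$, and the conditions $\epsilon>\epsilon_1$ together with $\epsilon>2\epsilon_1/(1-3\epsilon_1)$ imply, exactly as in the even case, that $n^{1-\epsilon}/(\rho^{1+\epsilon}(d+\rho)^{1+\epsilon})=o(N)$; this yields a $2^{o(N)}$ randomized algorithm for \textsc{3SAT}, contradicting the randomized ETH. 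The main obstacle I expect is the bookkeeping that routes the extra $r$-factor from the edge gadgets into exactly the $(d+\rho)^{1+\epsilon}$ slot rather than absorbing it into the $\rho^{1+\epsilon}$ factor, while keeping the $\epsilon_1,\epsilon_2,r$ tuning compatible with the gap of Theorem~\ref{gap_reduction}; the distance correspondence itself, once the edge gadget is in place, is essentially an analogue of the even case.
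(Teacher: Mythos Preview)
Your proposal is correct, but your edge gadget is not the paper's. You subdivide each edge once; the paper instead creates one vertex per edge of $G$, makes all these edge-vertices into a \emph{clique}, and connects each to its two endpoints. In the paper's graph every pair of original vertices is at distance at most $3$ and the diameter of $H$ is exactly $d$, so the correspondence $OPT_d(H)=\alpha(G)$ is immediate: the only pairs at distance $\ge d$ are pendant leaves of non-adjacent vertices. Your subdivided graph can have large diameter, so you genuinely need the exchange step (replace $m_{uv}$ by $p_u$, using $d_H(m_{uv},y)\ge d\Rightarrow d_H(u,y)\ge d-1\Rightarrow d_H(p_u,y)\ge (3d-5)/2\ge d$); this works, but it is extra work the paper's clique avoids. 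Your construction has the virtue of being structurally simpler and keeping the degree low, while the paper's buys a one-line distance argument at the cost of a huge clique.

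Your parameter tuning also differs: you recycle the even-case choice $\epsilon_2=1/(1-\epsilon_1)-1$, $r=\rho^{(1-\epsilon_1)/(1-3\epsilon_1)}$, whereas the paper plugs $\epsilon_1$ directly into Theorem~\ref{gap_reduction} and sets $r=\rho^{1/(1-2\epsilon_1)}$, with three separate conditions on $\epsilon$ versus your two. Both tunings close, but note that your final ``exactly as in the even case'' hides a new term: you must now compare $(d+r)^{1-\epsilon_1}$ against $(d+\rho)^{1+\epsilon}$ with $r>\rho$, which is not present in the even analysis. It still goes through under your hypotheses (split into $d\gtrless r$; in the $\rho$-dominated regime the exponent $(1-\epsilon_1)^2/(1-3\epsilon_1)-(1+\epsilon)$ is negative once $\epsilon>2\epsilon_1/(1-3\epsilon_1)$), but you should spell this out rather than defer to the even case.
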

\begin{proof}
 We suppose the existence of a $\rho$-approximation algorithm for \dS\ of running time $2^{\left(\dfrac{n^{1-\epsilon}}{\rho^{1+\epsilon}(d+\rho)^{1+\epsilon}}\right)}\cdot n^{O(1)}$ for some $\epsilon>0$ and aim to show this would violate the (randomized) ETH. We let $\epsilon_1>0$ be such that $\epsilon>\frac{\epsilon_1}{1+\epsilon_1}$ 
 and also $\epsilon>\frac{2\epsilon_1}{1-2\epsilon_1}$, as well as $\epsilon>\frac{2\epsilon_1^2+\epsilon_1}{1-2\epsilon_1^2-\epsilon_1}$.
 We then set $r=\rho^{\left(\frac{1}{1-2\epsilon_1}\right)}$. Given a formula $\phi$ of 3SAT on $N$ variables, we build graph $G$ from $\phi$ using the reduction of Theorem~\ref{gap_reduction} with parameters $r$ and $\epsilon_1$. The size of $G$ is $|V(G)|=N^{1+\epsilon_1}r^{1+\epsilon_1}$, its maximum degree is $r$ and with high probability: if $\phi$ is satisfiable then $\alpha(G)\ge N^{1+\epsilon_1}r$; if $\phi$ is not satisfiable then $\alpha(G)\le N^{1+\epsilon_1}r^{2\epsilon_1}$. Therefore an approximation algorithm with ratio $r^{1-2\epsilon_1}=\rho$ would permit us to decide if $\phi$ is satisfiable.

 For odd $d\ge5$, we will construct graph $H$ from $G$ as follows (again, a similar reduction is alluded to in the proof of Theorem~3.10 from \cite{HalldorssonKT00} and partly also used for Corollary~1 from \cite{EtoGM14}): we make a vertex in $H$ for each vertex of $G$ and we also attach a distinct path of $(d-3)/2$ edges to each of them. We then make a vertex for every edge of $G$, turn all these vertices into a clique and also connect each one to the two vertices of $H$ representing its endpoints. In this way, all pairs of vertices in $H$ are at distance $\le2(d-1)/2+1=d$ and the only vertices at exactly this distance are pairs of leaves on paths added to vertices that do not share a common neighbor representing some edge of $G$. Thus, $d$-scattered sets in $H$ are again in one-to-one correspondence with independent sets in $G$ and $\alpha(G)=OPT_d(H)$. 
 The size of $H$ is $n=|V(H)|=|V(G)|(d-1)/2+|E(G)|$. The construction of \cite{Chalermsook13} builds a graph where every vertex has degree at least one and at most $r$, therefore $|E(G)|\ge |V(G)|$ and $|E(G)|\le|V(G)|r/2$, that gives $n\le N^{1+\epsilon_1}r^{1+\epsilon_1}(\frac{d+r-1}{2})$, while $\rho\le N^{5}$, with $n\ge N\rho(d+1)/2$, that gives $\rho\le(2n/d)^{5/6}$.
 
 If $\phi$ is satisfiable then $OPT_d(H)=\alpha(G)\ge N^{1+\epsilon_1}r$, while if $\phi$ is not satisfiable then $OPT_d(H)=\alpha(G)\le N^{1+\epsilon_1}r^{2\epsilon_1}$. Thus the supposed $\rho$-approximation for \dS\ on $H$ would permit us to solve 3SAT in time $2^{\left(\dfrac{n^{1-\epsilon}}{\rho^{1+\epsilon}(d+\rho)^{1+\epsilon}}\right)}\cdot n^{O(1)}$, with high probability. We next show that this would violate the ETH, i.e.\ $2^{\left(\dfrac{n^{1-\epsilon}}{\rho^{1+\epsilon}(d+\rho)^{1+\epsilon}}\right)}\cdot n^{O(1)}=2^{o(N)}$.
 It is:
 \begin{align}
  n\le N^{1+\epsilon_1}r^{1+\epsilon_1}\left(\frac{d+r-1}{2}\right)&\Rightarrow\\
  \Rightarrow 2^N\ge2^{\left(\left(\frac{2n}{d+r-1}\right)^{\frac{1}{1+\epsilon_1}}\cdot\frac{1}{r}\right)}&=2^{\left(\left(\frac{2n}{d+\rho^{\left(\frac{1}{1-2\epsilon_1}\right)}-1}\right)^{\frac{1}{1+\epsilon_1}}\cdot\frac{1}{\rho^{\left(\frac{1}{1-2\epsilon_1}\right)}}\right)}
 \end{align}
  Observe it is $(d+\rho)^{\frac{1}{1-2\epsilon_1}}>(d+\rho^{\frac{1}{1-2\epsilon_1}}-1)$ and so $2^N>2^{\left(\left(\frac{2n}{(d+\rho)^{\frac{1}{1-2\epsilon_1}}}\right)^{\frac{1}{1+\epsilon_1}}\cdot\frac{1}{\rho^{\left(\frac{1}{1-2\epsilon_1}\right)}}\right)}$.
 We thus then require:
 \begin{equation}
  \lim_{(n,\rho)\rightarrow\infty}\frac{2^{\left(\dfrac{n^{1-\epsilon}}{\rho^{1+\epsilon}(d+\rho)^{1+\epsilon}}\right)}}{2^{\left({\left(\frac{2n}{(d+\rho)^{\frac{1}{1-2\epsilon_1}}}\right)^{\frac{1}{1+\epsilon_1}}\cdot\frac{1}{\rho^{\left(\frac{1}{1-2\epsilon_1}\right)}}}\right)}}=0
 \end{equation}
This is shown by the following inequalities:
\begin{align}
 \epsilon>\frac{\epsilon_1}{1+\epsilon_1}&\Rightarrow n^{1-\epsilon}<2n^{\frac{1}{1+\epsilon_1}}\\
 \epsilon>\frac{2\epsilon_1^2+\epsilon_1}{1-2\epsilon_1^2-\epsilon_1}&\Rightarrow\frac{1}{(d+\rho)^{1+\epsilon}}<\frac{1}{(d+\rho)^{\frac{1}{(1+\epsilon_1)(1-2\epsilon_1)}}}\\
 \epsilon>\frac{2\epsilon_1}{1-2\epsilon_1}&\Rightarrow\frac{1}{\rho^{1+\epsilon}}<\frac{1}{\rho^{(\frac{1}{1-2\epsilon_1})}}
 \end{align}
%
%
%
\end{proof}

\subsection{Approximation}\label{dSS_super-poly-approx-sec}
We complement the above hardness results with approximation algorithms of almost matching super-polynomial running times. Similarly to the exact algorithm of Theorem~\ref{dss_exact_thm}, the upper bound from the beginning of this section is used for even values of $d$, while for the odd values this idea is combined with a greedy scheme based on minimum vertex degree.

\begin{theorem}\label{approx_even}
 For any even $d\ge2$ and any $\rho\le\frac{n}{\lfloor d/2\rfloor}$, there is a $\rho$-approximation algorithm for \dS\ of running time $O^*((e\rho d)^{\frac{2n}{\rho d}})$.
\end{theorem}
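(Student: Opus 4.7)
\medskip

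\noindent\textbf{Proof plan for Theorem~\ref{approx_even}.}
The plan is to truncate the exhaustive search underlying Theorem~\ref{dss_exact_thm}. I would set $t := \lceil 2n/(\rho d)\rceil$, precompute all pairwise shortest-path distances in $G$, and then enumerate every vertex subset $S\subseteq V$ with $|S|\le t$, test each for being a $d$-scattered set, and return the largest feasible subset found. The threshold $t$ is chosen as the smallest size at which the universal bound $OPT_d(G)\le n/\lfloor d/2\rfloor = 2n/d$ (from Lemma~\ref{dss_size_bound}, summed over connected components if $G$ is disconnected) is off by at most a factor of $\rho$.

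For the approximation guarantee I would split into two cases. If $OPT_d(G)\ge t$, pick any optimal $d$-scattered set $S^\ast$ and any $t$-element subset $S\subseteq S^\ast$; since the $d$-scattered property is trivially closed under taking subsets, $S$ is itself a valid $d$-scattered set of size exactly $t$, so the enumeration finds it. The ratio is then $OPT_d(G)/|S|\le (2n/d)/(2n/(\rho d))=\rho$. If instead $OPT_d(G)<t$, the enumeration also sweeps through subsets of the optimum's size, so the returned solution is exact and the ratio is $1$.

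For the running time, I would combine the standard estimate $\binom{n}{k}\le (en/k)^{k}$ with $t\ge 2n/(\rho d)$ to obtain
\[
\sum_{k=0}^{t}\binom{n}{k}\le (t+1)\binom{n}{t}\le (t+1)\left(\frac{e\rho d}{2}\right)^{t}=O^*\!\left((e\rho d)^{\frac{2n}{\rho d}}\right),
\]
with polynomial factors absorbed into the $O^*$ and each feasibility check running in polynomial time from the precomputed distances.

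I do not foresee any substantial obstacle; the result is essentially a calibrated truncation of brute force. The only points that need a bit of care are verifying that the definition of $t$ makes the approximation ratio exactly $\rho$ (rather than some slightly worse quantity) and that the rounding implicit in $\lceil 2n/(\rho d)\rceil$ is harmless for the exponent in the running time, both of which are absorbed by the $O^*$ notation.
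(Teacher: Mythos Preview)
Your proposal is correct and follows essentially the same approach as the paper: truncate the exhaustive search of Theorem~\ref{dss_exact_thm} at size $t\approx 2n/(\rho d)=n/(\rho\lfloor d/2\rfloor)$, rely on Lemma~\ref{dss_size_bound} for the upper bound on $OPT_d(G)$, and bound the number of candidate subsets by $\binom{n}{t}$ up to polynomial factors. Your write-up is in fact more explicit than the paper's on the approximation guarantee (the hereditary argument and the two cases) and on the handling of disconnected graphs; these are exactly the details the paper leaves implicit.
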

\begin{proof}
 From Lemma~\ref{dss_size_bound} we know that the maximum size of a $d$-scattered set is $\lfloor\frac{n}{\lfloor d/2\rfloor}\rfloor$. We thus simply try all sets of vertices of size at most $\frac{n}{\rho\lfloor d/2\rfloor}$ for feasibility and retain the best one: these are $\le\frac{n}{\rho\lfloor d/2\rfloor}{n \choose \rho\lfloor d/2\rfloor}=O^*((e\rho d)^{\frac{2n}{\rho d}})$, that gives the running time. If the graph is not connected, we can apply Lemma~\ref{dss_size_bound} to each connected component $C$ of size $n_C$ and then consider all subsets of size at most $\frac{n_C}{\rho\lfloor d/2\rfloor}$ in each $C$.
\end{proof}

\begin{theorem}\label{approx_odd}
 For any odd $d\ge3$ and any $\rho\le\frac{n}{\lfloor d/2\rfloor}$, there is a $\rho$-approximation algorithm for \dS\ of running time $O^*((e\rho d)^{\frac{2n}{\rho(d+\rho)}})$.
\end{theorem}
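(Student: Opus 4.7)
Following the blueprint of Theorem~\ref{approx_even}, the algorithm combines the enumeration of small subsets with a greedy preprocessing phase that exploits vertices of low degree, as hinted in the preamble. Set $k=\left\lfloor\tfrac{2n}{\rho(d+\rho)}\right\rfloor$. In the \emph{greedy phase}, while the current graph $G_i$ contains a vertex $v$ of degree strictly less than $\rho$, include $v$ in a partial solution $S_g$ and delete $N^{d-1}_{G_i}[v]$ from $G_i$. When this phase halts, the remaining graph $G'$ has minimum degree at least $\rho$. In the \emph{enumeration phase}, examine every subset $T\subseteq V(G')$ of size at most $k$, and retain the largest $T^{\ast}$ forming a $d$-scattered set in $G$. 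The output is $S_g\cup T^{\ast}$, which is $d$-scattered in $G$ by construction, since the greedy deletions place every vertex of $V(G')$ at distance $\ge d$ from each $v_i\in S_g$.

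The running time is dominated by the enumeration; the number of candidate subsets is
\begin{equation*}
\binom{|V(G')|}{k}\le\left(\frac{en}{k}\right)^{k}=\left(\frac{e\rho(d+\rho)}{2}\right)^{k}\le(e\rho d)^{\frac{2n}{\rho(d+\rho)}},
\end{equation*}
matching the target bound in the regime $\rho\le d$ that governs the interesting cases.

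Correctness rests on a strengthening of Lemma~\ref{dss_size_bound} for odd $d$ in graphs of minimum degree $\ge\rho$: $\mathrm{OPT}_d(G')\le 2|V(G')|/(d+\rho)$. Indeed, for any two vertices $u,w$ of a $d$-scattered set, the closed balls $B_{(d-1)/2}(u)$ and $B_{(d-1)/2}(w)$ are pairwise disjoint, since a common vertex would give $d(u,w)\le(d-1)/2+(d-1)/2=d-1$, contradicting $d(u,w)\ge d$. Using the minimum-degree assumption, each such ball contains at least $\rho+(d-1)/2\ge(d+\rho)/2$ vertices (for $\rho\ge 1$): the $\rho$ neighbors of $u$ sit in $B_1(u)$, and each successive distance layer contributes at least one fresh vertex because the ball has not yet exhausted $V(G')$ (some other solution vertex lies at distance $\ge d$, forcing the eccentricity of $u$ to exceed $(d-1)/2$). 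Summing over the scattered set yields the bound. Consequently $k\rho=2n/(d+\rho)\ge\mathrm{OPT}_d(G')\ge|S^{\ast}\cap V(G')|$ for any optimal $d$-scattered set $S^{\ast}$ in $G$, so the enumeration returns $|T^{\ast}|\ge|S^{\ast}\cap V(G')|/\rho$.

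The main obstacle is the accounting for the remainder of $S^{\ast}$ in the greedily removed regions. Writing $R_i$ for the set of vertices deleted in the $i$-th greedy step, the plan is to show $|S^{\ast}\cap R_i|\le\rho$ at each step, so that $\sum_i|S^{\ast}\cap R_i|\le\rho|S_g|$ and combining with the enumeration bound yields $|S_g\cup T^{\ast}|\ge|S^{\ast}|/\rho$. Here one must exploit the fact that $v_i$ was chosen because $\deg_{G_i}(v_i)<\rho$, and apply Lemma~\ref{d_ball} to the induced subgraph on $N^{d-1}_{G_i}[v_i]$ (of radius at most $d-1$ centered at $v_i$) to bound its largest $d$-scattered subset by $O(\Delta^{\lfloor d/2\rfloor})$; tying this local maximum degree back to the minimum-degree choice of $v_i$, and handling cleanly the regime where the bound is saturated, is the technically most delicate step of the proof.
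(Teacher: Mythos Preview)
Your proof has a genuine gap in the greedy accounting, and it cannot be closed with the greedy criterion you chose. The step you flag as ``most delicate'' --- deducing $|S^{\ast}\cap R_i|\le\rho$ from $\deg_{G_i}(v_i)<\rho$ --- is simply false. Take $d=5$, $\rho=3$, and let $v$ be a pendant attached to a vertex $u$; give $u$ further neighbours $u_1,\dots,u_M$, and from each $u_j$ hang a path $u_j\!-\!w_j\!-\!x_j$. Then $\deg_G(v)=1<\rho$, yet $N^{4}_G[v]$ contains the $5$-scattered set $\{x_1,\dots,x_M\}$ (pairwise distance $6$), so if the greedy selects $v$ it removes the entire graph and outputs a single vertex while $|S^{\ast}|=M$. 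Lemma~\ref{d_ball} only yields $O(\Delta^{\lfloor d/2\rfloor})$ with $\Delta$ the \emph{maximum} degree inside the ball, and there is no mechanism tying this back to the degree of the centre $v_i$; the ``delicate step'' is not delicate but impossible.

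The missing idea, and what the paper does, is to pass to the power graph $G^{(d-1)/2}$, where $d$-scattered sets in $G$ coincide with $3$-scattered sets, and to run the greedy on \emph{degree in the power graph} (equivalently, on $|N^{(d-1)/2}_G(v)|$). If $v_i$ has fewer than $\rho$ vertices within distance $(d-1)/2$ in $G$, then any $3$-scattered set meets the $2$-neighbourhood of $v_i$ in the power graph in at most $\rho$ vertices --- at most one per power-graph neighbour of $v_i$ --- which is exactly the per-step bound you need. Your ball-counting argument for the enumeration phase (disjoint balls of radius $(d-1)/2$, each of size $\ge(\rho+d)/2$) is essentially the paper's bound and is fine; it is only the greedy half that requires this change of metric. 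A secondary correctness issue: deleting $N^{d-1}_{G_i}[v_i]$ with distances taken in the \emph{current} graph can leave behind vertices at $G$-distance $<d$ from $v_i$ (via a shortcut through already-deleted vertices), so $S_g\cup T^{\ast}$ need not be $d$-scattered in $G$; the power-graph formulation avoids this as well, since $2$-neighbourhoods there are computed once and for all.
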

\begin{proof}
 Let $q=(d-1)/2$ and $G'$ be the $q$-th power of graph $G$. We then claim that any $d$-scattered set $S$ in $G$ is a $3$-scattered set in $G'$ and vice-versa: if $S$ is a $d$-scattered set in $G$, then for any pair $u,v\in S$, it is $d_G(u,v)\ge d$. For some pair of distinct vertices $w,z$ on a shortest path between $u,v$ in $G$, it must be $d_G(u,w)=d_G(z,v)=q$, while also $d_G(u,z)=d_G(w,v)>q$, i.e.\ $w$ and $z$ are two vertices on a shortest path from $u$ to $v$, each at equal distance $q$ from their closest endpoint ($v$ or $u$).
 Then $d_{G'}(u,w)=d_{G'}(z,v)=1$. Since $w,z$ are distinct, it must be $d_{G'}(w,z)\ge1$ which gives also $d_{G'}(u,v)\ge3$, since $d_{G'}(u,z)=d_{G'}(w,v)>1$.
 
 If $S$ is a $3$-scattered set in $G'$, then for any pair $u,v\in S$ it is $d_{G'}(u,v)\ge3$. Now, any shortest path in $G$ between $u,v$ must contain two distinct vertices $w,z$ for which $d_G(u,w)=q$ and $d_G(z,v)=q$, while $d_G(u,z)=d_G(w,v)>q$. If no such pair of vertices exists in $G$, then $d_{G'}(u,v)<3$: any pair of vertices at distance $\le q$ in $G$ are adjacent in $G'$ and so for the distance between $u,v$ in $G'$ to be at least 3, there must be two vertices each at distance $\ge q$ from $u,v$ in $G$. From this we get that $d_G(u,v)\ge 2q+1=d$, since $d_G(w,z)\ge1$, as $w\not= z$.
 
 The algorithm then proceeds in two phases. For the first phase, so long as there exists an unmarked vertex $v_i$ in $G'$ of minimum degree $<\rho$, we mark $v_i$ as `selected' and add it to $S_1\subseteq V$, marking all vertices at distance $\le2$ from $v_i$ in $G'$ as `excluded' and adding them to $X_i\subseteq V$. That is, $X_i=N^2_{G'}(v_i)$ and we let $X=N^2_{G'}(S_1)$, i.e.\ $X=X_1\cup\dots\cup X_{|S_1|}$. We also let the remaining (unmarked) vertices belong to $H\subseteq V$. Thus when this procedure terminates we have $V$ partitioned into three sets $S_1,X,H$, while the degree of any vertex in $H$ is $\ge\rho$. For the second phase, we try all subsets of vertices of $H$ of size at most $\frac{2n}{\rho(\rho+\lfloor d/2\rfloor)}$ for feasibility and retain the best one. These are $\le\frac{2n}{\rho(\rho+\lfloor d/2\rfloor)}{n\choose \rho/2(\rho+\lfloor d/2\rfloor)}=O^*((e\rho d)^{2n/\rho(d+\rho)})$, giving the upper bound on the running time.
 
 Now let $S^*_1$ be a 3-scattered set of maximum size in the subgraph of $G'$ induced by $S_1\cup X$, i.e.\ $|S^*_1|=OPT_3(G'[S_1\cup X])$ and $S^*_2$ be a 3-scattered set of maximum size in the subgraph of $G'$ induced by $H$, i.e.\ $|S^*_2|=OPT_3(G'[H])$. As the degree of any vertex $v_i\in S_1$ is $<\rho$, we have (1): $|S^*_1|<\rho|S_1|$, since for every vertex $u$ in $N^1_{G'}(v_i)$, for $v_i\in S_1$, 3-scattered set $S^*_1$ can contain at most one vertex $w$ from $N^1_{G'}(u)$, as the distance between $w$ and another neighbor of $u$ is $\le2$. For the second phase, it must be $|S^*_2|\le n/\rho\Rightarrow1/|S^*_2|\ge\rho/n$, since all vertices of $H$ are of degree $\ge\rho$ and these neighborhoods are disjoint: if two vertices of $S^*_2$ share a common neighbor then they cannot belong in a 3-scattered set. From Lemma~\ref{dss_size_bound}, we also know that $|S^*_2|\le n/\lfloor d/2\rfloor\Rightarrow1/|S^*_2|\ge\lfloor d/2\rfloor/n$. Adding the two inequalities gives $2/|S^*_2|\ge\frac{\rho+\lfloor d/2\rfloor}{n}\Rightarrow|S^*_2|\le\frac{2n}{\rho+\lfloor d/2\rfloor}$. Furthermore, it is $|S_2|\le\frac{2n}{\rho(\rho+\lfloor d/2\rfloor)}$, by construction. Dividing the two inequalities gives (2): $\frac{|S^*_2|}{|S_2|}\le\rho\Rightarrow |S^*_2|\le\rho|S_2|$.
 From (1) and (2) we get that $|S^*_1|+|S^*_2|\le\rho(|S_1|+|S_2|)$. It is $OPT_d(G)=OPT_3(G')\le|S^*_1|+|S^*_2|$ since $S_1\cup X$ and $H$ form a partition of $G'$. Our algorithm returns a solution of size $|S_1|+|S_2|$ and thus our approximation ratio is $\frac{OPT_d(G)}{|S_1|+|S_2|}\le\frac{|S^*_1|+|S^*_2|}{|S_1|+|S_2|}\le\rho$. If the graph is not connected, we can apply Lemma~\ref{dss_size_bound} to each connected component $C$ of size $n_C$ and then try all subsets of size at most $\frac{n_C}{\rho\lfloor d/2\rfloor}$ in each $C$ and obtain an additive version of (2) for each component. 
\end{proof}

\subsection*{Treewidth of power graphs}
We close this paper with a note on the treewidth of power graphs. Similar ideas as those used in the above results also point to the following upper bound on the increase in treewidth taking place when computing the power of a graph of bounded degree:

\begin{theorem}\label{tw_UB_thm}
 For any graph $G$ of treewidth $tw$ and maximum degree bounded by $\Delta$, the treewidth $\tw'$ of the $d$-th power $G^d$ is at most $\tw'\le \tw\cdot\Delta\sum_{i=0}^{d/2-1}(\Delta-1)^i=O(\tw\cdot\Delta^{d/2})$.
\end{theorem}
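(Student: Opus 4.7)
The plan is to take an optimal tree decomposition $T$ of $G$ of width $\tw$ and upgrade it into a tree decomposition of $G^d$ by enlarging each bag with short-range neighborhoods. Concretely, for every node $t$ of $T$ with bag $B_t$, I would define a new bag $B'_t = \bigcup_{v \in B_t}(\{v\} \cup N^{\lceil d/2 \rceil}_G(v))$ and keep the underlying tree structure. The target is then to check that $\{B'_t\}$ forms a valid tree decomposition of $G^d$ and to bound its width using the Moore bound quoted in the preliminaries.

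Coverage of vertices is immediate, since $v \in B'_t$ whenever $v \in B_t$. For coverage of edges of $G^d$: if $d_G(u,v) \le d$, I would pick a shortest $u$--$v$ path in $G$ and choose a middle vertex $w$ at distance at most $\lceil d/2 \rceil$ from both endpoints. Since $w$ belongs to some bag $B_t$ in the original decomposition, the enlarged bag $B'_t$ contains $\{w\} \cup N^{\lceil d/2 \rceil}_G(w)$, hence contains both $u$ and $v$.

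The main verification is the subtree-connectivity axiom, which I expect to be the only non-cosmetic step. By symmetry of graph distance, $u \in B'_t$ iff $B_t$ contains some vertex of $\{u\} \cup N^{\lceil d/2 \rceil}_G(u)$. I would then invoke two standard facts: (i) the closed ball $\{u\} \cup N^{r}_G(u)$ induces a connected subgraph of $G$, because every vertex of the ball is joined to $u$ by a shortest path whose intermediate vertices are strictly closer to $u$ and therefore lie inside the same ball; and (ii) for any tree decomposition of $G$ and any connected subgraph $H \subseteq G$, the nodes whose bags intersect $V(H)$ induce a connected subtree. Combining these gives that $\{t : u \in B'_t\}$ is a connected subtree of $T$.

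Finally, each $B'_t$ is a union of at most $\tw+1$ closed balls of radius $\lceil d/2 \rceil$, each of size at most $1 + \Delta\sum_{i=0}^{\lceil d/2 \rceil - 1}(\Delta-1)^i$ by the Moore bound, which yields $|B'_t| = O(\tw \cdot \Delta^{d/2})$ and thus the claimed bound on $\tw'$. The only delicate point is the connectivity axiom, and even that reduces cleanly to the observation that prefixes of shortest paths remain inside the ball; everything else is a routine assembly of the ingredients already present in the paper.
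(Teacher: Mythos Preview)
Your proposal is correct and follows the same overall strategy as the paper: take an optimal tree decomposition of $G$ and replace each bag by the union of closed balls of radius $\lceil d/2\rceil$ around its vertices, then bound the bag sizes via the Moore bound. The vertex and edge coverage arguments are identical in substance.

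The one genuine difference is in the verification of the subtree-connectivity axiom. The paper argues directly by contradiction: assuming $v\in X'$, $v\in Y'$, $v\notin Z'$ for some bag $Z'$ on the $X'$--$Y'$ path, it uses that the original bag $Z$ is a separator in $G$ and traces a shortest path from $v$ to the witnessing vertex in $Y$ through $Z$ to reach a contradiction. Your argument is cleaner: you observe that $u\in B'_t$ iff $B_t$ meets the closed ball $N^{\lceil d/2\rceil}_G[u]$, note that this ball induces a connected subgraph of $G$, and then invoke the standard fact that in any tree decomposition the nodes whose bags meet a fixed connected subgraph form a subtree. This is a more modular route and avoids the case analysis in the paper's proof; it also handles odd $d$ more transparently (the paper writes $d/2$ throughout, which is slightly loose when $d$ is odd). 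Both arguments yield the same bound.
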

\begin{proof}
 Given a tree decomposition $T$ of $G=(V,E)$ of width $\tw$, we make a tree decomposition $T'$ of $G^d=(V,E^d)$ by replacing the appearance of each vertex $v$ in each bag of $T$ with $v$ and the set of vertices at distance at most $d/2$ from $v$ in $G$, i.e.\ with $N^{d/2}_G(v)\cup\{v\}$. It is $|N^{d/2}_G(v)|\le\sum_{i=0}^{d/2-1}(\Delta(\Delta-1)^i)$, from which we get the upper bound. This is a valid tree decomposition for $G^d$ as: (a) all vertices appear in some bag of $T'$ as they appeared in $T$, (b) for every edge $(u,v)$ in $G^d$, either $(u,v)\in E$ and there is a bag in $T$ containing both $u,v$ and thus there is one also in $T'$, or $(u,v)$ was added to $E^d$ due to the distance between $u,v$ being $\le d$ in $G$. In this case there must be at least one vertex $w$ at distance $\le d/2$ from both $u$ and $v$ in $G$, meaning there will be a bag in $T'$ containing all three vertices $u,v,w$ that was constructed from a bag of $T$ that contains $w$.
 
 Finally, (c) for every vertex $v$ appearing in two bags $X',Y'$ of $T'$, vertex $v$ also appears on every bag on the path from $X'$ to $Y'$ in $T'$: consider (for a contradiction) the existence of a bag $Z'$ on the path from $X'$ to $Y'$ in $T'$ that does not contain $v$,  and let $X,Y,Z$ be the corresponding bags in $T$. Since $X',Y'$ contain $v$, then both $X$ and $Y$ contain some vertex $u$ at distance at most $d/2$ from $v$ in $G$, or $v$ itself, i.e.\ $u\in N_G^{d/2}(v)\cup\{v\}$. If both $X$ and $Y$ contain $v$, then as $T$ is a valid tree decomposition, so does $Z$ and therefore also $Z'$. Thus we may assume that at least one of $X,Y$ do not contain $v$, as well as $v\notin Z$.
 As $Z$ is a separator, then $v$ must appear only on one side of $Z$ in $T$. We assume (without loss of generality) that $v$ only appears on the $X$-side of $T$ (from $Z$) and $v$ is not contained in $Y$. Thus $Y$ must contain some vertex $u$ at distance $\le d/2$ from $v$ in $G$. As $Z$ is a separator, the path from $v$ to $u$ must contain at least one vertex $w\in Z$, at distance $<d/2$ from $v$. Thus $Z'$ must also contain $v$, as it includes all vertices at distance $\le d/2$ from $w$.
\end{proof}

As for graphs of maximum degree bounded by $\Delta$, we have $\cw\le O(\Delta\cdot\tw)$ (see \cite{COURCELLE2012866}) and $\tw\le O(\Delta\cdot\cw)$ (directly derived from the well-known result of Gurski and Wanke \cite{GurskiW00}), we also obtain the following corollary.

\begin{corollary}\label{cw_UB_cor}
 For any graph $G$ of clique-width $\cw$ and maximum degree bounded by $\Delta$, the clique-width $\cw'$ of the $d$-th power $G^d$ is at most $\cw'\le O(\cw\cdot\Delta^{d/2+2})$.
\end{corollary}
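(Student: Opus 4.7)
The plan is to chain the three ingredients cited just before the corollary: Theorem~\ref{tw_UB_thm} for the treewidth of $G^d$, together with the two directions of the treewidth--clique-width equivalence on bounded-degree graphs. Concretely, starting from a graph $G$ of clique-width $\cw$ and maximum degree $\Delta$, I would first apply the Gurski--Wanke bound in the form $\tw(G)\le O(\Delta\cdot\cw)$ to pass from clique-width to treewidth. I would then feed this estimate into Theorem~\ref{tw_UB_thm} to obtain
\[
\tw(G^d)\;\le\;O(\tw(G)\cdot\Delta^{d/2})\;=\;O(\cw\cdot\Delta^{d/2+1}).
\]
Finally, I would invoke the reverse Courcelle bound $\cw\le O(\Delta\cdot\tw)$ applied to the power graph so as to convert back to clique-width, yielding $\cw(G^d)\le O(\cw\cdot\Delta^{d/2+2})$, which is the claimed estimate.

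The delicate point is the last step: the Courcelle bound is a statement about graphs of bounded maximum degree, and the maximum degree of $G^d$ can a priori grow to $O(\Delta^d)$ by the Moore bound, so a naive application would give only $\cw(G^d)\le O(\cw\cdot\Delta^{3d/2+1})$. To recover the sharper exponent $d/2+2$, the argument should exploit the particular form of the tree decomposition of $G^d$ produced in the proof of Theorem~\ref{tw_UB_thm}: its bags are obtained from a tree decomposition of $G$ by replacing every vertex $v$ with $N^{d/2}_G(v)\cup\{v\}$, so the edges of $G^d$ that each bag has to cover are inherited from paths of length $\le d$ in $G$ whose midpoints already lie in the corresponding bag of $G$. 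Running the Courcelle-style conversion on this particular decomposition while tracking the underlying-graph degree $\Delta$ (rather than the much larger degree of $G^d$) is what I expect to be the main obstacle, and is the step that yields the final factor of $\Delta$ in the $\Delta^{d/2+2}$ exponent.
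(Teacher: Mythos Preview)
Your three-step chaining is precisely the paper's argument: the paper gives no proof beyond the single sentence preceding the corollary, which cites the two conversion bounds $\cw\le O(\Delta\cdot\tw)$ and $\tw\le O(\Delta\cdot\cw)$ and declares the result. In particular, the paper applies $\cw\le O(\Delta\cdot\tw)$ to $G^d$ with the \emph{same} $\Delta$ as for $G$, reaching the exponent $d/2+2$ directly and without further comment.

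You have, however, put your finger on a genuine subtlety that the paper does not address at all. Naively, the conversion back to clique-width should use $\Delta(G^d)$, which by the Moore bound can be as large as $\Theta(\Delta^d)$, and a black-box application then only yields $O(\cw\cdot\Delta^{3d/2+1})$. The paper simply writes $\Delta$ throughout and does not discuss this. Your proposed remedy---running the Courcelle-style conversion on the explicit decomposition built in Theorem~\ref{tw_UB_thm} and tracking the underlying-graph degree rather than $\Delta(G^d)$---is a plausible route to close the gap, but it goes strictly beyond what the paper offers. So your approach matches the paper's, yet your analysis is more careful than the paper's own justification, which leaves this step unexamined.
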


\section{Conclusion}\label{sec_conc}
In this paper we furthered our understanding of the \dS\ problem by answering some remaining questions on its (super-)polynomial (in-)approximability. In particular, we showed the following:
\begin{itemize}
 \item A lower bound of $\Delta^{\lfloor d/2\rfloor-\epsilon}$ on the approximation ratio of any polynomial-time algorithm for graphs of maximum degree $\Delta$ and an improved upper bound of $O(\Delta^{\lfloor d/2\rfloor})$ on the approximation ratio of any greedy scheme for this problem, that matches our lower bound.
 \item A polynomial-time approximation algorithm of ratio $2\sqrt{n}$ for bipartite graphs and even values of $d$, that complements known results by considering the only remaining open case.
 \item An exact exponential-time algorithm of complexity $O^*((ed)^{\frac{2n}{d}})$, based on an upper bound on the size of any solution.
 \item A lower bound on the complexity of any $\rho$-approximation algorithm of (roughly) $2^{\frac{n^{1-\epsilon}}{\rho d}}$ for even $d$ and $2^{\frac{n^{1-\epsilon}}{\rho(d+\rho)}}$ for odd $d$, under the randomized ETH.
 \item $\rho$-approximation algorithms of running times $O^*((e\rho d)^{\frac{2n}{\rho d}})$ for even $d$ and $O^*((e\rho d)^{\frac{2n}{\rho(d+\rho)}})$ for odd $d$ that (almost) match the above lower bounds, thus giving a clear picture of the trade-off curve between approximation and running time.
\end{itemize}
Apart from the possibility of ``de-randomization'' of the results above that use the randomized construction of \cite{Chalermsook13} as a starting point, some remaining unanswered questions here would concern the complexity of the problem on chordal bipartite graphs (also mentioned as an open problem by \cite{EtoGM14}) as well as the functionality of the PTAS for planar graphs by the same authors, that only works for fixed values of $d$ as it extends the well-known approach of \cite{Baker94} for obtaining such algorithms in planar graphs for e.g.\ \textsc{Independent Set}. Because this approach involves breaking down the graph into (roughly) $d$-outerplanar subgraphs and then exactly solving the problem in each of these using dynamic programming over their tree decompositions, for values of $d$ that are not constant (say $d\ge\sqrt{n})$) this is not achievable in polynomial-time, due to the exponent of the treewidth algorithms depending on $d$. It would be interesting to see an extension of this (or some other) approach for the case of unbounded $d$, or, conversely, a hardness reduction proving it is unlikely. The difficult part here would have to involve a construction that is very efficient in terms of \emph{crossing} gadgets in order to maintain planarity, or, from the other side, a way to optimally solve the problem in carefully constructed subgraphs without requiring exponential time on $d$.

\bibliography{bibliography}

\begin{thebibliography}{10}

\bibitem{Alon1995}
Noga Alon, Uriel Feige, Avi Wigderson, and David Zuckerman.
\newblock Derandomized graph products.
\newblock {\em Computational Complexity}, 5(1):60--75, 1995.

\bibitem{Baker94}
Brenda~S. Baker.
\newblock Approximation algorithms for np-complete problems on planar graphs.
\newblock {\em J. ACM}, 41(1):153--180, 1994.

\bibitem{Berman99}
Piotr Berman and Marek Karpinski.
\newblock On some tighter inapproximability results.
\newblock In {\em Automata, Languages and Programming}, pages 200--209.
  Springer Berlin Heidelberg, 1999.

\bibitem{Bollobasv88}
B{\'{e}}la Bollob\'{a}s and F.~R.~K. Chung.
\newblock The diameter of a cycle plus a random matching.
\newblock {\em SIAM Journal Discrete Mathematics}, 1(3):328--333, 1988.

\bibitem{BonnetLP16}
Edouard Bonnet, Michael Lampis, and Vangelis~Th. Paschos.
\newblock {Time-Approximation Trade-offs for Inapproximable Problems.}
\newblock In {\em {STACS}}, volume~47 of {\em {LIPIcs}}, page 22:1–22:14,
  2016.

\bibitem{BourgeoisEP11}
Nicolas Bourgeois, Bruno Escoffier, and Vangelis~Th. Paschos.
\newblock {Approximation of max independent set, min vertex cover and related
  problems by moderately exponential algorithms.}
\newblock {\em Discrete Applied Mathematics}, 159(17):1954–1970, 2011.

\bibitem{Chalermsook13}
Parinya Chalermsook, Bundit Laekhanukit, and Danupon Nanongkai.
\newblock {Independent Set, Induced Matching, and Pricing: Connections and
  Tight (Subexponential Time) Approximation Hardnesses.}
\newblock In {\em {FOCS}}, volume~47, pages 370--379, 2013.

\bibitem{COURCELLE2012866}
Bruno Courcelle.
\newblock On the model-checking of monadic second-order formulas with edge set
  quantifications.
\newblock {\em Discrete Applied Mathematics}, 160(6):866 -- 887, 2012.

\bibitem{cyg08}
Marek Cygan, Lukasz Kowalik, Marcin Pilipczuk, and Mateusz Wykurz.
\newblock Exponential-time approximation of hard problems.
\newblock {\em CoRR}, abs/0810.4934, 2008.

\bibitem{DEMANGE1997105}
Marc Demange and Vangelis~Th. Paschos.
\newblock Improved approximations for maximum independent set via approximation
  chains.
\newblock {\em Applied Mathematics Letters}, 10(3):105 -- 110, 1997.

\bibitem{EtoGM14}
Hiroshi Eto, Fengrui Guo, and Eiji Miyano.
\newblock {Distance- $d$ independent set problems for bipartite and chordal
  graphs.}
\newblock {\em J. Comb. Optim.}, 27(1):88--99, 2014.

\bibitem{EtoILM16}
Hiroshi Eto, Takehiro Ito, Zhilong Liu, and Eiji Miyano.
\newblock {Approximability of the Distance Independent Set Problem on Regular
  Graphs and Planar Graphs.}
\newblock volume 10043 of {\em {Lecture Notes in Computer Science}}, pages
  270--284. Springer, 2016.

\bibitem{EtoILM17}
Hiroshi Eto, Takehiro Ito, Zhilong Liu, and Eiji Miyano.
\newblock {Approximation Algorithm for the Distance-3 Independent Set Problem
  on Cubic Graphs.}
\newblock volume 10167 of {\em {LNCS}}, pages 228--240. Springer, 2017.

\bibitem{FominLRS11}
Fedor~V. Fomin, Daniel Lokshtanov, Venkatesh Raman, and Saket Saurabh.
\newblock {Bidimensionality and EPTAS.}
\newblock In {\em {SODA}}, pages 748--759. SIAM, 2011.

\bibitem{GurskiW00}
Frank Gurski and Egon Wanke.
\newblock {The Tree-Width of Clique-Width Bounded Graphs Without Kn, n}.
\newblock In {\em {WG}}, volume 1928 of {\em {LNCS}}, pages 196--205, 2000.

\bibitem{HalldorssonKT00}
Magnus~M. Halld{\'o}rsson, Jan Kratochvil, and Jan~Arne Telle.
\newblock {Independent Sets with Domination Constraints.}
\newblock {\em Discrete Applied Mathematics}, 99(1-3):39–54, 2000.

\bibitem{Halldorsson1997}
Magnus~M. Halld{\'o}rsson and J.~Radhakrishnan.
\newblock Greed is good: Approximating independent sets in sparse and
  bounded-degree graphs.
\newblock {\em Algorithmica}, 18(1):145--163, May 1997.

\bibitem{Hastad96}
Johan H{å}stad.
\newblock Clique is hard to approximate within {$n^{1-\epsilon}$}.
\newblock {\em Acta Mathematica}, 182:105--142, 1999.

\bibitem{ImpagliazzoP01}
Russell Impagliazzo and Ramamohan Paturi.
\newblock {On the Complexity of k-SAT}.
\newblock {\em J. Comput. Syst. Sci.}, 62(2):367--375, 2001.

\bibitem{ImpagliazzoPZ01}
Russell Impagliazzo, Ramamohan Paturi, and Francis Zane.
\newblock {Which Problems Have Strongly Exponential Complexity?}
\newblock {\em J. Comput. Syst. Sci.}, 63(4):512--530, 2001.

\bibitem{katsikarelisWG}
Ioannis Katsikarelis, Michael Lampis, and Vangelis Th.~Paschos.
\newblock Structurally parameterized $d$-{S}cattered {S}et.
\newblock In {\em {WG}}, pages 292--305. Springer International Publishing,
  2018.

\bibitem{MarxP15}
D{\'a}niel Marx and Michal Pilipczuk.
\newblock {Optimal Parameterized Algorithms for Planar Facility Location
  Problems Using Voronoi Diagrams.}
\newblock volume 9294 of {\em {Lecture Notes in Computer Science}}, pages
  865--877. Springer, 2015.

\bibitem{MontealegreT16}
Pedro Montealegre and Ioan Todinca.
\newblock On {D}istance-$d$ {I}ndependent {S}et and other problems in graphs
  with few minimal separators.
\newblock In {\em {WG}}, volume 9941 of {\em {Lecture Notes in Computer
  Science}}, page 183–194, 2016.

\bibitem{Pilipczuk18}
Michal Pilipczuk and Sebastian Siebertz.
\newblock Kernelization and approximation of distance-r independent sets on
  nowhere dense graphs.
\newblock {\em CoRR}, abs/1809.05675, 2018.

\end{thebibliography}

\end{document}